\documentclass[11pt,a4paper]{article}

\usepackage{preamble}

\begin{document}
\setlength{\abovedisplayskip}{6pt}
\setlength{\belowdisplayskip}{6pt}

\title{On estimating operator norm distance, with optimal trace distance estimation when one state is pure}

\author[1]{Yupan Liu\thanks{Email: \url{yupan.liu@epfl.ch}}}
\author[2]{Qisheng Wang\thanks{Email: \url{QishengWang1994@gmail.com}}}
\author[3]{Zhan Yu\thanks{Email: \url{yu.zhan@u.nus.edu}}}
\affil[1]{School of Computer and Communication Sciences, \'Ecole Polytechnique F\'ed\'erale de Lausanne}
\affil[2]{School of Computer Science, Shanghai Jiao Tong University}
\affil[3]{Centre for Quantum Technologies, National University of Singapore}
\date{}

\maketitle
\pagenumbering{roman}
\thispagestyle{empty}

\begin{abstract}
    We investigate the computational complexity of estimating the operator norm distance ${\rm T}_{\infty}(\rho_0,\rho_1)$, defined via the operator norm $\|A\|_{\infty} \coloneqq \sigma_{\max}(A)$, where $\sigma_{\max}(A)$ is the largest singular value of $A$, given ${\rm poly}(n)$-size state-preparation circuits of $n$-qubit quantum states $\rho_0$ and $\rho_1$. We provide efficient quantum estimators for the operator norm distance whose complexity is \emph{independent} of the rank (and thus the dimension) of the states:
    \begin{enumerate}[label={\upshape(\arabic*)}]
        \item When one state is pure, we establish an optimal quantum estimator using $\Theta(1/\epsilon)$ queries to the state-preparation circuits. Consequently, for constant additive error, say $\epsilon=1/5$, our estimator runs in ${\rm poly}(n)$ time. Since the operator norm distance ${\rm T}_{\infty}(|\psi\rangle\!\langle\psi|,\rho)$ is \emph{exactly half} of the trace distance ${\rm T}(|\psi\rangle\!\langle\psi|,\rho)$, our result gives a rank-independent query complexity for estimating ${\rm T}_{\infty}(|\psi\rangle\!\langle\psi|,\rho)$ and ${\rm T}(|\psi\rangle\!\langle\psi|,\rho)$, whereas the approaches due to \hyperlink{cite.vACGN23}{van Apeldoorn, Cornelissen, Gily{\'{e}}n, and Nannicini (SODA 2023)} and \hyperlink{cite.WZ24}{Wang and Zhang (TIT 2024)} have query complexity scaling at least linearly with ${\rm rank}(\rho)$, which can be $\exp(n)$ in general.
        In addition, our query complexity matches the optimal bound when \emph{both} states are pure by \hyperlink{cite.Wang24pureQSD}{Wang (TIT 2024)}.
        
        \item  For general quantum states, we also provide a quantum estimator using $\widetilde{O}(1/\epsilon^{3/2})$ queries to the state-preparation circuits, which shows that the corresponding promise problem is ${\sf BQP}$-complete and improves the ${\sf QMA}$ upper bound sketched by \hyperlink{cite.LW25Lalpha}{Liu and Wang (ESA 2025)}. Together with an $\Omega(1/\epsilon)$ quantum query complexity lower bound, this leaves only square-root room for improvement. 
    \end{enumerate}
    The key intuition behind our estimators is that, when one state is pure, the pure state $|\psi\rangle$ has overlap at least $1/2$ with the top unit eigenvector of $|\psi\rangle\!\langle\psi|-\rho$, reflecting a structural feature specific to the operator norm distance. 
\end{abstract}

\newpage
\tableofcontents
\thispagestyle{empty}

\newpage
\pagenumbering{arabic}
\section{Introduction}

Testing the closeness between quantum states is a central topic in quantum property testing~\cite{MdW16}, which studies efficient quantum algorithms for testing properties of quantum objects and extends classical distribution testing (see e.g.,~\cite{Canonne20}) to the non-commutative setting. This problem is also closely connected to the verification of quantum devices, such as $Q_0$ and $Q_1$, which prepare the corresponding quantum states $\rho_0$ and $\rho_1$. In (tolerant) quantum state testing, one seeks to design efficient quantum algorithms that determine whether $\rho_0$ is $\sfrac{2}{5}$-far from or $\sfrac{1}{5}$-close to $\rho_1$ with respect to a chosen closeness measure, typically the trace distance, which is induced by the $\ell_1$ norm. 

Importantly, the trace distance is one of the most widely used measures of closeness because it plays a central role in several research areas: quantum hypothesis testing and state discrimination~\cite{Hel67,Hol73}, information-theoretically secure quantum cryptography~\cite{BHLMO05,RK05,PR22}, classical and statistical zero-knowledge~\cite{SV97,Wat02,Wat09}, and certification and benchmarking~\cite{GLN05,BOW19,EHW+20}. 
Beyond the $\ell_1$ norm distance, namely the total variation distance and the trace distance, (tolerant) closeness testing between probability distributions or quantum states with respect to the $\ell_\alpha$ norm distance, for instance, 
\[\Talpha(\rho_0,\rho_1) \coloneqq \frac{1}{2} \norm*{\rho_0-\rho_1}_\alpha, \quad\text{where }\norm{A}_{\alpha} \coloneqq \tr(\abs{A}^{\alpha})^{1/\alpha},\] 
has also been studied in~\cite{Waggoner15,LW25Lalpha}. Interestingly, for all real $\alpha \geq 1$ for which $\Talpha(\rho_0,\rho_1)$ is a distance metric, a dichotomy emerges: 
\begin{itemize}
    \item In the $\ell_1$ norm case, the sample complexity, namely the number of samples drawn from the distributions, scales polynomially with the support size of those distributions~\cite{VV17,CDVV14,JHW18}. A similar phenomenon holds for quantum states: the number of state copies required scales polynomially with the dimension or rank of the states~\cite{BOW19}. 
    \item For all real $\alpha>1$, the sample complexity becomes \emph{independent} of the support size~\cite{Waggoner15}. Analogously, rank-independent query and sample complexity upper bounds for estimating the quantum $\ell_\alpha$ distance, defined via the Schatten norm, were also obtained in~\cite{LW25Lalpha}. 
\end{itemize}

However, the case $\alpha=\infty$ remains mysterious. 
On the classical side, the support-size-independent (and thus dimension-independent) sample complexity upper bounds in~\cite{Waggoner15} already include this endpoint case. 
On the quantum side, however, the operator-norm-distance case is still listed as an open problem in~\cite{LW25Lalpha}: 

\begin{quote}
    What are the query and sample complexity upper bounds for estimating the operator norm distance $\Tinfty(\rho_0,\rho_1)$? Can one obtain such bounds in a \emph{rank-independent} manner, similar to the classical setting, or is rank dependence \emph{unavoidable}? 
\end{quote}

Why is it important to study the operator norm distance? While the trace distance ($\alpha=1$) serves as the most important case in the quantum $\ell_\alpha$ distance hierarchy, defined via the Schatten norm, the operator norm distance ($\alpha=\infty$), namely
\[ \Tinfty(\rho_0,\rho_1) \coloneqq \frac{1}{2}\norm*{\rho_0-\rho_1}_{\infty}, \]
is the most relaxed notion of closeness precisely because it is the endpoint of the hierarchy, and thus may be viewed as another particularly important case. The operator norm distance captures the largest spectral deviation between two states, and therefore arises as a closeness measure in quantum state tomography~\cite{CKW+16,XK16,vACGN23}, as well as an intermediate analytic tool that can sometimes be converted into trace-distance guarantees in~\cite{GKKT20}. This spectral viewpoint is also closely connected to compressed-sensing and low-rank recovery approaches, which allow high-dimensional states to be reconstructed from far fewer measurements~\cite{Liu11,Gross11}. Beyond measuring the closeness between states, related operator-norm notions also play an important role in quantum process and unitary tomography~\cite{HKOT23} and in quantifying the precision of Hamiltonian simulation~\cite{BCK15,HHKL23}. 

In this work, we provide an affirmative answer to these questions on estimating $\Tinfty(\rho_0,\rho_1)$. 
We next present our main results and discuss their significance.

\subsection{Main results}

We start by stating our first main result, where $\td(\rho_0,\rho_1)\coloneqq \td_{\alpha=1}(\rho_0,\rho_1)=\frac{1}{2}\tr\abs*{\rho_0-\rho_1}$:

\begin{theorem}[Optimal quantum estimator for the trace and operator norm distances when one state is pure, informal]
    \label{thm-informal:OnePureQSD-query}
    Given quantum query access to the state-preparation circuits of two $n$-qubit quantum states, denoted by $\rho$ and $\ketbra{\psi}{\psi}$, without prior knowledge of which input circuit prepares the pure state, the quantum query complexity for estimating $\td(\rho,\ketbra{\psi}{\psi})$ or $\Tinfty(\rho,\ketbra{\psi}{\psi})$ to within additive error $\epsilon$ is $\Theta(1/\epsilon)$. 
\end{theorem}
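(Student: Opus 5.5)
Throughout, write $A \coloneqq \ketbra{\psi}{\psi}-\rho$. The plan has three parts: an exact spectral formula for both distances, an estimator built on that formula, and a matching lower bound.

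\textbf{Spectral formula.} The matrix $A$ has trace zero, and $-\rho \preceq A \preceq \ketbra{\psi}{\psi}$. Since $\rho \succeq 0$, Weyl interlacing shows that $A$ has at most one positive eigenvalue, call it $\lambda_+$. Trace zero then forces the negative eigenvalues to sum to $-\lambda_+$. Hence $\tr\abs{A}=2\lambda_+$, and every negative eigenvalue has absolute value at most $\lambda_+$. This gives
\[\Tinfty(\rho,\ketbra{\psi}{\psi})=\tfrac12\lambda_+ \quad\text{and}\quad \td(\rho,\ketbra{\psi}{\psi})=\lambda_+,\]
so it suffices to estimate $\lambda_+=\lambda_{\max}(A)$ to additive error $\epsilon$. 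Let $\ket{v}$ be the top eigenvector. Because $-\rho$ is negative semidefinite, $\lambda_+ = \bra{v}A\ket{v} \le \abs{\braket{v}{\psi}}^2$. The abstract asserts the stronger structural fact $\abs{\braket{v}{\psi}}^2\ge 1/2$ when $\lambda_+$ is non-negligible. I would prove it from the secular equation $\lambda_+ = \bra{\psi}(\lambda_+ I+\rho)^{-1}\ket{\psi}^{-1}$-type identities: $\ket{v}\propto(\lambda_+ I+\rho)^{-1}\ket{\psi}$.

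\textbf{Estimator.} The algorithm proceeds in four steps.
\begin{enumerate}
\item Using the state-preparation circuits (purified for $\rho$), build a block-encoding of $A/2$ by LCU from block-encodings of $\ketbra{\psi}{\psi}$ and $\rho$. Each use costs $O(1)$ queries.
\item Prepare $\ket{\psi}$ as the initial state.
\item Run phase estimation on the qubitized walk of this block-encoding, to precision $\epsilon$, using $O(1/\epsilon)$ queries.
\item Output the largest phase observed.
\end{enumerate}
Because $\ket{\psi}$ has constant overlap with $\ket{v}$, a single phase-estimation run returns an estimate of $\lambda_+$ with constant probability. Negative eigenvalues cannot be confused with $\lambda_+$, since we read off the signed eigenvalue, or we apply a threshold test instead.

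A cleaner route avoids estimating the maximum from samples. For each threshold $t$, decide whether $\lambda_+\ge t+\epsilon$ or $\lambda_+\le t$. Do this with an eigenvalue-threshold projector, implemented by QSVT as a degree-$O(1/\epsilon)$ polynomial approximating a step function at $t$, applied to $\ket{\psi}$. Then estimate its success probability, which is either $\gtrsim 1/2$ or $\approx 0$, to constant precision. A noisy binary search over $t$ naively costs an extra $\log(1/\epsilon)$ factor. I would remove it with a geometrically decreasing precision schedule, whose costs sum to $O(1/\epsilon)$.

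When $\lambda_+\lesssim \epsilon$ the overlap bound is irrelevant, and we can simply output $0$. Not knowing which circuit prepares the pure state is handled by running the procedure with both assignments. The correct assignment yields a positive estimate. For the wrong assignment, $\rho-\ketbra{\psi}{\psi}$ has $\lambda_{\max}$ at most $\lambda_+$, so I would output the maximum of the two runs. This needs care: I must check that the wrong assignment cannot produce a spurious overestimate. That holds because each run only certifies eigenvalues genuinely present in its matrix, and every eigenvalue of the wrong-assignment matrix is at most $\lambda_+$.

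\textbf{Lower bound and main obstacle.} The $\Omega(1/\epsilon)$ lower bound follows by restricting to two pure states. There the known optimal bound for pure-state trace distance estimation applies, which reduces to amplitude estimation hardness. The main obstacle is proving the overlap bound $\abs{\braket{v}{\psi}}^2\ge 1/2$ rigorously. From $\ket{v}\propto(\lambda_+I+\rho)^{-1}\ket{\psi}$ together with $\lambda_+ = \abs{\braket{v}{\psi}}^2-\bra{v}\rho\ket{v}$, I would bound $\bra{v}\rho\ket{v}$. Writing $\ket{v}=c\ket{\psi}+s\ket{\psi^\perp}$ and using the eigen-equation projected onto $\ket{\psi^\perp}$ should give $s^2\le c^2$.
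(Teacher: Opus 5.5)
Your spectral reduction (at most one positive eigenvalue by Weyl monotonicity, tracelessness, hence $\td(\rho,\ketbra{\psi}{\psi})=\lambda_+$ and $\Tinfty(\rho,\ketbra{\psi}{\psi})=\lambda_+/2$) matches the paper. So do running both assignments and taking the maximum, and the lower bound via the pure-vs.-pure case.

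The genuine gap is the constant-overlap bound $\abs{\braket{v}{\psi}}^2\ge 1/2$, which you leave at ``should give''. It is the entire source of the rank-independent $O(1/\epsilon)$ bound. With only your proven $\abs{\braket{v}{\psi}}^2\ge\lambda_+$ you would be stuck in the $\widetilde{O}(1/\epsilon^{3/2})$ regime of the general case.

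The ingredients you list cannot suffice on their own. The eigen-equation (in any projection) and $\lambda_+=\abs{\braket{v}{\psi}}^2-\bra{v}\rho\ket{v}$ hold for every positive semidefinite $\rho$. Yet take $\rho=M\ketbra{\phi}{\phi}$ with $\braket{\psi}{\phi}=\cos\alpha$ and let $M\to\infty$. The positive eigenvector of $\ketbra{\psi}{\psi}-\rho$ tends to the unit vector in $\mathrm{span}\{\ket{\psi},\ket{\phi}\}$ orthogonal to $\ket{\phi}$. Its squared overlap with $\ket{\psi}$ is $\sin^2\alpha$, which is arbitrarily small.

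So the normalization $\tr\rho=1$ must enter the overlap argument itself. The missing idea is to use it as $\rho^2\preceq\rho$, from $0\preceq\rho\preceq I$. Write $c=\braket{\psi}{v}$; the eigen-equation gives $\rho\ket{v}=c\ket{\psi}-\lambda_+\ket{v}$. Hence $\bra{v}\rho\ket{v}=\abs{c}^2-\lambda_+$ and $\norm{\rho\ket{v}}_2^2=\abs{c}^2+\lambda_+^2-2\lambda_+\abs{c}^2$. The inequality $\norm{\rho\ket{v}}_2^2\le\bra{v}\rho\ket{v}$ then yields $\abs{c}^2\ge(1+\lambda_+)/2$ for every $\lambda_+>0$, which is the paper's warm-start lemma.

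Your secular-equation route closes the same way. Given $\bra{\psi}(\lambda_+I+\rho)^{-1}\ket{\psi}=1$, one has $\abs{c}^2=1/\bra{\psi}(\lambda_+I+\rho)^{-2}\ket{\psi}$. Expanding $\bra{\psi}(\lambda_+I+\rho)^{-2}\rho(I-\rho)\ket{\psi}\ge 0$ gives $\bra{\psi}(\lambda_+I+\rho)^{-2}\ket{\psi}\le 2/(1+\lambda_+)$. By contrast, the natural estimate from the $\ket{\psi^\perp}$-component, $\lambda_+\abs{s}\le\norm{\rho\ket{v}}_2$, only gives an overlap of order $\lambda_+$.

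On the estimator, your first route is not sound as written. Plain phase estimation at resolution $\epsilon$ errs by more than $\epsilon$ on each eigencomponent with constant probability. Outputting the largest phase over repeated runs therefore amplifies overestimates, in the correct run and in the wrong-assignment run alike. Your claim that each run only certifies eigenvalues genuinely present fails for it.

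This is exactly what the paper delegates to maximum phase estimation \cite{MdW23}. It applies it to the negated qubitized walk of $(I\mp\Delta)/2$, with $\Delta$ your $A/2$, so that the eigenphases lie in $[0,2\pi-2\epsilon)$ and the largest one is monotone in $\lambda_{\max}(\pm\Delta)$. That primitive guarantees the output is $\epsilon$-close to \emph{some} eigenphase even without the overlap promise. Together with $\lambda_{\max}(\Delta)\ge\lambda_{\max}(-\Delta)$, this handles the unknown assignment.

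Your second route is a valid and more self-contained alternative. It runs QSVT threshold tests on the warm start inside a search whose interval shrinks by a constant factor per level, with failure probabilities decreasing geometrically toward the cheap coarse levels. A verdict ``above $t$'' certifies, with high probability, an eigenvalue above $t$ supported on the initial state. That is the one-sided guarantee you need for the wrong assignment, and the costs sum to $O(1/\epsilon)$. Once the warm-start lemma is supplied, it reaches the paper's bound without invoking \cite{MdW23}.
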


Here, the lower bounds in \Cref{thm-informal:OnePureQSD-query} follow directly from the corresponding quantum query lower bounds for estimating $\td(\ketbra{\psi_0}{\psi_0},\ketbra{\psi_1}{\psi_1})$ and $\Tinfty(\ketbra{\psi_0}{\psi_0},\ketbra{\psi_1}{\psi_1})$ to within additive error $\epsilon$ in the case where both states are pure, established in~\cite[Theorem V.2]{Wang24pureQSD} and~\cite[Theorem 22(1)]{LW25Lalpha}, respectively.

\begin{theorem}[Quantum estimator for the operator norm distance, informal]
    \label{thm-informal:QSDinfty-query}
    Given quantum query access to the state-preparation circuits of $n$-qubit quantum states $\rho_0$ and $\rho_1$, there exists an explicit quantum algorithm that estimates $\Tinfty(\rho_0,\rho_1)$ to within additive error $\epsilon$ with query complexity $\widetilde{O}(1/\epsilon^{3/2})$, where $\widetilde{O}(\cdot)$ suppresses poly-logarithmic factors.
\end{theorem}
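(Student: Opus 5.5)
We estimate $\Tinfty(\rho_0,\rho_1)=\frac12\max\{\lambda_{\max}(\rho_0-\rho_1),\lambda_{\max}(\rho_1-\rho_0)\}$ by estimating each top eigenvalue separately. The identity holds because $\rho_0-\rho_1$ is traceless, so its operator norm is the larger of the two one-sided extremal eigenvalues. By symmetry it suffices to estimate $\lambda^\ast\coloneqq\lambda_{\max}(A)$ with $A\coloneqq\rho_0-\rho_1$ to within $\epsilon$, using $\widetilde{O}(1/\epsilon^{3/2})$ queries.

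\textbf{Block-encoding.} From the state-preparation circuits, the standard purified-access construction gives block-encodings of $\rho_0$ and $\rho_1$, each using $O(1)$ queries. A linear combination of unitaries then yields a block-encoding of $A/2$.

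\textbf{Reduction to a search over states.} Write $A=\sum_j\lambda_j\ket{u_j}\!\bra{u_j}$. Since we cannot prepare the top eigenvector $\ket{u_1}$ directly, we use $\rho_0$ itself as a starting state. For any threshold $t>0$, an eigenvector with $\lambda_j\ge t$ satisfies
\[
\bra{u_j}\rho_0\ket{u_j}\;\ge\;\lambda_j+\bra{u_j}\rho_1\ket{u_j}\;\ge\;t .
\]
Hence the weight of $\rho_0$ on the eigenspace $\{\lambda\ge t\}$ is at least $t$ whenever $\lambda^\ast\ge t$. This generalizes the one-pure-state intuition. In the pure case the overlap is at least $1/2$, which yields the $\Theta(1/\epsilon)$ bound; in the mixed case the overlap can be as small as about $\lambda^\ast$.

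\textbf{Algorithm.} The estimator is a search over thresholds $t\in\{k\epsilon\}$, or a binary search. For each $t$ we test whether $\lambda^\ast\gtrsim t$ or $\lambda^\ast\lesssim t-\epsilon$ as follows:
\begin{enumerate}
\item Prepare a purification of $\rho_0$.
\item Apply a QSVT polynomial implementing an approximate threshold projector onto eigenvalues of $A/2$ above $(t-\epsilon)/2$ versus below $(t-2\epsilon)/2$. The polynomial has degree $\widetilde{O}(1/\epsilon)$.
\item Use amplitude estimation to decide whether the success probability is at least about $t$ or essentially zero. If no eigenvalue exceeds $t-\epsilon$, the projector annihilates everything up to polynomial error, so this is a detection problem. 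It costs $O(1/\sqrt{t})$ rounds of amplitude amplification.
\end{enumerate}
The total cost per threshold is $\widetilde{O}(1/(\epsilon\sqrt{t}))$. This is $\widetilde{O}(1/\epsilon^{3/2})$ once $t\ge\epsilon$, and thresholds $t<\epsilon$ are trivial to resolve. A binary search contributes only logarithmic overhead, with error reduction by median amplification.

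\textbf{Main obstacle.} The delicate step is the error analysis of the threshold test.
\begin{itemize}
\item The QSVT polynomial's leakage on eigenvalues just below the threshold must be kept below roughly $\epsilon$ in amplitude so that the ``zero'' case stays distinguishable from weight $t$. This forces a precision parameter of order $\epsilon$ inside the polynomial; since the degree depends only logarithmically on this precision, it remains compatible with the total budget.
\item For small $t$ the guaranteed weight is only about $t$. To keep the amplitude-amplification cost at $1/\sqrt{\epsilon}$, the gap between the two hypotheses has to be set so that the weight on the ``yes'' side is at least $t-\epsilon\ge\epsilon$.
\end{itemize}
Finally, we combine the estimates for $\pm A$ and halve the result to obtain $\Tinfty(\rho_0,\rho_1)$ to within $\epsilon$.
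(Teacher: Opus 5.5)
\textbf{Comparison with the paper.} Your proposal is correct, and its key ingredient is the same as the paper's. Summing your identity $\langle u_j|\rho_0|u_j\rangle=\lambda_j+\langle u_j|\rho_1|u_j\rangle$ over an eigenspace gives exactly the top eigenspace overlap lemma (Lemma~\ref{lem:top_eigenspace_overlap}, with $A=2\Delta$). The cost structure also matches: a $1/\sqrt{\epsilon}$ amplification overhead on top of $\widetilde{O}(1/\epsilon)$ spectral resolution.

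What differs is the algorithmic wrapper. The paper qubitizes a dilation of $(I\mp\Delta)/2$, so that eigenvalues of $\pm\Delta$ map monotonically to eigenphases of $-W_{A_k}$. It then invokes the maximum phase estimation routine of \cite{MdW23} as a black box, with initial state $I\otimes Q_k$ and $\gamma=\sqrt{\epsilon}$. When $\lambda_{\max}<\epsilon$ the overlap promise fails; the paper then relies on the routine's second guarantee (the output is still $\epsilon$-close to some eigenphase) together with clipping at $0$. The result is a single-shot estimator with $O(\log(1/\epsilon)/\epsilon^{3/2})$ queries.

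You effectively unroll maximum phase estimation by hand, using QSVT threshold projectors, amplitude estimation, and a binary search over $t$. This is more self-contained. It also only needs the overlap bound for a spectral window $\{\lambda\ge s\}$ with $0<s\le\lambda^\ast$, rather than for the exact top eigenspace, and your argument does give this (the weight is at least $\sum_{\lambda_j\ge s}\lambda_j\ge\lambda^\ast$). The price is extra logarithmic factors from the search and from majority-vote error reduction, which $\widetilde{O}(\cdot)$ absorbs.

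\textbf{A parameter slip to fix.} A polynomial that passes $\lambda\ge t-\epsilon$ and rejects $\lambda\le t-2\epsilon$ annihilates the input only when $\lambda^\ast\le t-2\epsilon$, not when $\lambda^\ast\le t-\epsilon$. So your test is reliable for $\lambda^\ast\ge t$ versus $\lambda^\ast\le t-2\epsilon$, and in between either answer is consistent with the truth.

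The binary search should therefore update $L\gets t-2\epsilon$ on YES and $R\gets t$ on NO. This still shrinks $[0,1]$ to an interval of width $O(\epsilon)$ after $O(\log(1/\epsilon))$ steps. Relatedly, thresholds $t\le 2\epsilon$ can be answered YES outright because $\lambda^\ast\ge 0$. Finally, the leakage only needs to satisfy $\delta^2\ll t$, so $\delta=c\sqrt{\epsilon}$ for a small constant $c$ already suffices.
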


It is worth noting that the quantum query complexity lower bound $\Omega\rbra{1/\epsilon}$ in \Cref{thm-informal:OnePureQSD-query} also applies to \Cref{thm-informal:QSDinfty-query}, and thus our upper bound leaves only a \emph{square-root} gap. 
Moreover, the quantum query upper bound in \Cref{thm-informal:QSDinfty-query} is strictly better than the corresponding upper bounds known for every real $\alpha>1$, and is in fact \emph{super-quadratically} better except for the case of $\alpha=2$. 
Indeed, for even integers $\alpha \geq 2$, the standard upper bound is $O(1/\epsilon^\alpha)$, following directly from~\cite{BCWdW01,EAO+02}; whereas for general real $\alpha>1$ (including odd integers $\alpha\geq 3$), the best known upper bound is $O\rbra[\big]{1/\epsilon^{\alpha+1+\frac{1}{\alpha-1}}}$ as shown in~\cite{LW25Lalpha}. 
These comparisons further support the view that the operator norm distance, as the endpoint of the quantum $\ell_\alpha$ distance hierarchy, is its \emph{easiest} case. 

Prior to this work, there are two classes of known approaches that can be used to estimate the operator norm distance given state-preparation circuits:

\begin{itemize}
    \item \textbf{General case:} 
    The operator norm distance between two quantum states can be estimated directly by quantum state tomography in the operator norm distance, whose current best approach is due to \cite[Theorem 45]{vACGN23} with query complexity $\widetilde{O}\rbra{d/\epsilon}$, where $d = 2^n$ is the dimension of the quantum states and $\epsilon$ is the desired precision.
    In comparison, our estimator in \cref{thm-informal:QSDinfty-query} with query complexity $\widetilde{O}\rbra{1/\epsilon^{3/2}}$ removes the dependence on $d$.
    In particular, when the state-preparation circuits are of size $\poly\rbra{n}$ (which is usually assumed when considering the computational complexity of quantum state testing, e.g., \cite{Wat02,RASW23}) and $\epsilon \geq 1/\poly\rbra{n}$ (e.g., $\epsilon = 0.1$ is a constant), our estimator in \cref{thm-informal:QSDinfty-query} runs in time $\poly\rbra{n}$, whereas the approach based on \cite{vACGN23} runs in time $\exp\rbra{n}$, therefore achieving an \textit{exponential} speedup. 
    \item \textbf{When one state is pure:} 
    In this case, the operator norm distance is equivalent to the trace distance (up to a constant factor). 
    There are quantum estimators for trace distance given in \cite{WGL+22,WZ24} and the current best query complexity is $\widetilde{O}\rbra{r/\epsilon^2}$ due to \cite{WZ24}, where $r$ is the larger rank of \textit{both} quantum states. 
    When one state, say $\rho_0$, is pure, the trace distance estimator has query complexity $\widetilde{O}\rbra{\rank\rbra{\rho_1}/\epsilon^2}$.
    In comparison, our estimator in \cref{thm-informal:OnePureQSD-query} with query complexity $O\rbra{1/\epsilon}$ removes the dependence on the larger rank. 
    In particular, when the state-preparation circuits are of size $\poly\rbra{n}$ and $\epsilon \geq 1/\poly\rbra{n}$, our estimator in \cref{thm-informal:OnePureQSD-query} runs in time $\poly\rbra{n}$, whereas the approach based on \cite{WZ24} runs in time $\exp\rbra{n}$ (note that $r$ can be as large as $d = 2^n$), therefore achieving an \textit{exponential} speedup. 
    When both states are pure, there are optimal quantum estimators for trace distance \cite{Wang24pureQSD,FW25} with query complexity $\Theta\rbra{1/\epsilon}$ and our estimator in \cref{thm-informal:OnePureQSD-query} reproduces their results. 
\end{itemize}

It should be noted that our results in \cref{thm-informal:OnePureQSD-query,thm-informal:QSDinfty-query} further lead to new quantum sample complexities of estimating the operator norm distance by the quantum sample-to-query lifting method recently developed in \cite{WZ23,WZ24b,TWZ25,CWZ25}. 
Specifically, \cref{thm-informal:OnePureQSD-query} implies a sample complexity of $\Theta\rbra{1/\epsilon^2}$ for estimating the trace distance when one state is pure (see \cref{thm:sample-pure}), which (i) reproduces the result in \cite{WZ24c} when both states are pure and (ii) matches the sample complexity lower bound in \cite[Theorem B.2]{Wang24pureQSD}; 
\cref{thm-informal:QSDinfty-query} implies a sample complexity of $\widetilde{O}\rbra{1/\epsilon^3}$ for estimating $\Tinfty(\rho_0,\rho_1)$ (see \cref{thm:sample-general}), which \textit{exponentially} improves the prior tomography-based approach due to \cite[Theorem V.4]{CHL+23} with sample complexity $O\rbra{d/\epsilon^2}$. 

\vspace{1em}
Let \QSDinfty{} denote the promise problem of testing the closeness between quantum states with respect to the operator norm distance, namely, deciding whether $\Tinfty(\rho_0,\rho_1)$ is at least $2/5$ or at most $1/5$.\footnote{See \Cref{def:QSDalpha} for a formal definition.} We also consider the following restricted variants \OnePureQSDinfty{} and, more generally, $\OnePureQSD_{\alpha}$ for the quantum $\ell_\alpha$ distance, in which at least one of $\rho_0$ and $\rho_1$ is pure; and \PureQSDalpha{}, in which both states are pure.

When the state-preparation circuits have polynomial-size descriptions,  the quantum algorithms in \Cref{thm-informal:OnePureQSD-query,thm-informal:QSDinfty-query} run in time polynomial in $n$. Therefore, in the white-box input model, these query complexity upper bounds imply polynomial-time quantum algorithms, i.e., \BQP{} upper bounds, for the corresponding promise problems, improving the prior \QMA{} upper bound sketched in \cite[Footnote 8]{LW25Lalpha}. 
Combining these upper bounds with the \BQP-hardness results established for $\PureQSD_\alpha$ when $1\leq \alpha \leq \infty$ in~\cite[Theorem 21]{LW25Lalpha}, and using the fact that the trace distance and the operator norm distance correspond to the two endpoints of the quantum $\ell_\alpha$ distance hierarchy, we obtain the following corollaries:
\begin{corollary}
    \QSDinfty{} is \BQP{}-complete. 
\end{corollary}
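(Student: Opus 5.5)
The plan is to prove the two inclusions, $\QSDinfty \in \BQP$ and $\QSDinfty$ is $\BQP$-hard, separately. Both rely on results already stated in the introduction.

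For membership, I would instantiate \Cref{thm-informal:QSDinfty-query} with a constant additive error, say $\epsilon = 1/20$. This is strictly less than half the promise gap $2/5 - 1/5 = 1/5$. The estimator then makes $\widetilde{O}(1/\epsilon^{3/2}) = O(1)$ queries to the state-preparation circuits $Q_0$ and $Q_1$ and their inverses (controlled versions where needed). Since these circuits have $\poly(n)$ size in the white-box model, each query costs $\poly(n)$ gates. The main check is that the query algorithm is explicit and its non-query overhead is $\poly(n)$. I expect the estimator to be built from block-encodings, quantum singular value transformation with polynomials of degree $\widetilde{O}(\mathrm{poly}(1/\epsilon))$, and amplitude estimation. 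Each such step uses $\poly(n, 1/\epsilon)$ gates, and the polynomial coefficients can be computed classically in $\poly(1/\epsilon)$ time, so the overhead is polynomial. The estimator succeeds with probability at least $2/3$. The decision rule is: accept (declare "far") if the estimate is at least $3/10$, and reject otherwise. By the error bound, this decides the promise problem correctly with bounded error, so $\QSDinfty \in \BQP$. Median-of-means amplification could push the error probability lower if desired.

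For hardness, I would invoke the \BQP-hardness of $\PureQSD_\alpha$ for all $1 \le \alpha \le \infty$ from \cite[Theorem 21]{LW25Lalpha}, specialized to $\alpha = \infty$. $\PureQSDinfty$ is exactly $\QSDinfty$ restricted to instances where both $\rho_0$ and $\rho_1$ are pure, with the same thresholds $2/5$ and $1/5$. So the identity map on instances is a trivial reduction from $\PureQSDinfty$ to $\QSDinfty$: yes-instances map to yes-instances and no-instances to no-instances. Hence $\QSDinfty$ is \BQP-hard.

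One subtlety should be verified: the hardness result may be stated with different threshold parameters $(a,b)$ than $(2/5, 1/5)$. If so, I would either rely on the hardness holding for any constant gap, which is typical of such BQP-hardness proofs via reductions from the acceptance probability of a BQP circuit, or adjust the estimator's thresholds accordingly. Since the membership argument above works for any constant gap, it matches whatever thresholds the hardness result uses. This parameter matching is the only point needing care; the rest follows directly from the cited results.
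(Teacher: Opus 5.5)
Your proposal is correct and follows the paper's argument. For membership, you run the $\widetilde{O}(1/\epsilon^{3/2})$-query estimator of \Cref{thm-informal:QSDinfty-query} at constant precision in the white-box model; for hardness, you use the trivial embedding of $\PureQSD_\infty$, which is \BQP{}-hard by \cite[Theorem 21]{LW25Lalpha}, into \QSDinfty{}. One small correction to your guess about the internals: the paper's estimator uses a unitary dilation (purified density matrices plus LCU), qubitization, and maximum phase estimation, with no QSVT polynomials, but every piece is still $\poly(n)$-size, so your gate-count conclusion stands.
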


\begin{corollary}
    \label{corr:OnePureQSDalpha-BQPcomplete}
    For every $\alpha\in[1,\infty]$, $\OnePureQSD_\alpha$ is \BQP{}-complete.
\end{corollary}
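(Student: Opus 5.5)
To prove \Cref{corr:OnePureQSDalpha-BQPcomplete}, I would handle membership in \BQP{} and \BQP{}-hardness separately. The hardness side is inherited. Every instance of $\PureQSD_\alpha$ is also an instance of $\OnePureQSD_\alpha$, because both states are pure and hence in particular at least one of them is. The identity map is therefore a valid reduction, and the \BQP{}-hardness of $\PureQSD_\alpha$ for all $1\le\alpha\le\infty$ from~\cite[Theorem 21]{LW25Lalpha} transfers directly. I would only check that the thresholds in \Cref{def:QSDalpha} agree for the two promise problems, so that yes/no instances map to yes/no instances.

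For containment, the key observation is that when one state is pure, all quantum $\ell_\alpha$ distances are fixed functions of a single number. Write $A=\rho-\ketbra{\psi}{\psi}$. This matrix is traceless and equals $\rho$ minus a rank-one PSD matrix, so by Weyl interlacing it has at most one negative eigenvalue. Call that eigenvalue $-\lambda$. Since $\tr A=0$, the positive eigenvalues sum to $\lambda$. Consequently
\[
\|A\|_1 = 2\lambda,\qquad \|A\|_\infty=\lambda,
\]
where the second equality holds because each positive eigenvalue is at most the sum $\lambda$. This gives $\Tinfty=\td/2$, as noted in the abstract. For general $\alpha$, convexity yields
\[
\lambda^\alpha\le \|A\|_\alpha^\alpha=\lambda^\alpha+\sum_i \mu_i^\alpha\le 2\lambda^\alpha,
\]
so $\|A\|_\alpha\in[\lambda,2^{1/\alpha}\lambda]$. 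This sandwiching is only a constant-factor equivalence, so it does not by itself transfer the fixed $1/5$ versus $2/5$ gap. For that reason I would not route the argument through $\lambda$ alone.

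Instead, I would prove containment by using the trace distance estimator of \Cref{thm-informal:OnePureQSD-query} for $\alpha=1$ and $\alpha=\infty$. With constant $\epsilon$ (any $\epsilon<1/10$ separates $2/5$ from $1/5$), the estimator uses $O(1)$ queries and, for $\poly(n)$-size circuits, runs in $\poly(n)$ time. This places $\OnePureQSD_1$ and $\OnePureQSD_\infty$ in \BQP{}; the estimator does not need to know which circuit prepares the pure state.

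For intermediate $\alpha\in(1,\infty)$, my first plan is to reuse the same machinery. The spectrum of $A$ is determined by $\lambda$ and the positive part $A_+$, and the estimator's key structural fact is that $\ket\psi$ overlaps at least $1/2$ with the relevant eigenvector, which lets one access the spectrum of $A$ efficiently. Failing that, I would cite the rank-independent $\ell_\alpha$ estimators of~\cite{LW25Lalpha}, which use $\poly(1/\epsilon)$ queries for each fixed $\alpha>1$ and therefore already give \BQP{} membership of $\QSD_\alpha$ at constant gap. That suffices, since $\OnePureQSD_\alpha$ is a subproblem of $\QSD_\alpha$. I expect the main obstacle to be checking exactly this: that the cited bounds, such as $O(1/\epsilon^{\alpha+1+1/(\alpha-1)})$, remain constant for fixed $\alpha$ and constant $\epsilon$, and that time efficiency holds in the white-box model, not just query efficiency. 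Combined with hardness, this gives \BQP{}-completeness for every $\alpha\in[1,\infty]$.
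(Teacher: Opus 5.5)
Your hardness argument (the identity reduction from $\PureQSD_\alpha$ plus \cite[Theorem 21]{LW25Lalpha}) and your membership argument for $\alpha\in\{1,\infty\}$ (running \Cref{thm:query-pure} with a constant $\epsilon<1/10$) are exactly what the paper does.

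For $\alpha\in(1,\infty)$ you take a different route. You cite the general rank-independent $\ell_\alpha$ estimator of \cite{LW25Lalpha}, which runs in $\poly(n)$ time with $O(1/\epsilon^{\alpha+1+1/(\alpha-1)})$ queries. That is the $\BQP$-containment half of that paper's completeness result for constant $\alpha>1$. So for each fixed $\alpha$ your argument is valid, and your ``main obstacle'' is settled by the citation.

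The paper instead derives the interior cases from its own endpoint estimator, by appealing to the $\ell_\alpha$ hierarchy with $\td$ and $\Tinfty$ as endpoints. Here you dismissed a route that works. Your bound $\lambda\le\|A\|_\alpha\le 2^{1/\alpha}\lambda$ says $\Tinfty\le\Talpha\le 2^{1/\alpha}\Tinfty$. Hence a yes instance ($\Talpha\ge 2/5$) has $\Tinfty\ge 2^{1-1/\alpha}/5$, while a no instance ($\Talpha\le 1/5$) has $\Tinfty\le 1/5$. The gap $(2^{1-1/\alpha}-1)/5$ is positive for every $\alpha>1$. This works because the factor $2^{1/\alpha}$ is strictly below the threshold ratio $2$. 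The cruder sandwich $\Tinfty\le\Talpha\le\td=2\Tinfty$ has ratio exactly $2$ and transfers nothing, so your refinement is precisely what makes the paper's endpoint-based justification rigorous. Running \Cref{thm:query-pure} with $\epsilon$ below half this gap decides $\OnePureQSD_\alpha$ using $O\bigl(1/(2^{1-1/\alpha}-1)\bigr)$ queries.

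Compared with your citation, this route has three advantages:
\begin{itemize}
\item It is self-contained: it needs no polynomial approximations of power functions inside QSVT.
\item It actually uses the one-pure spectral structure you derived.
\item It remains polynomial when $\alpha$ depends on $n$, for instance $\alpha=n$ or $\alpha-1=1/\poly(n)$.
\end{itemize}
In those $n$-dependent regimes, the $\epsilon^{-(\alpha+1+1/(\alpha-1))}$ cost of \cite{LW25Lalpha} at constant $\epsilon$ becomes super-polynomial.
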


\subsection{Proof techniques}

We begin by explaining the intuition behind our quantum estimators for the operator norm distance. One natural approach is based on phase estimation and proceeds as follows: 
\begin{enumerate}[label={\upshape(\arabic*)}]
    \item Apply an appropriate version of phase estimation,\footnote{For example, an approach with query complexity $\widetilde O\rbra{\sqrt{d}/\epsilon}$ is known in \cite[Lemma 50]{vAGGdW20} based on phase estimation, equipped with appropriate versions of Hamiltonian simulation and quantum minimum finding, where $d$ is the dimension of $\Delta$.} constructed based on a unitary dilation of $\Delta \coloneqq (\rho_0-\rho_1)/2$,\footnote{Given the state-preparation circuits of $\rho$ and $\ketbra{\psi}{\psi}$, or oracle access to these circuits, one can efficiently construct a unitary dilation of $\Delta$ via the techniques of Linear Combination of Unitaries (LCU)~\cite{CW12,BCC+15} and purified density matrix \cite{LC19}. See \Cref{lemma:dilation-Delta} for an explicit construction.} to a state supported on the \emph{top eigenspace} of $\Delta$, which corresponds to the largest eigenvalue $\lambda_{\max}(\Delta)$. 
    \item Repeat the same construction for $-\Delta$, and then take the maximum of the two resulting estimates, thereby obtaining an estimate of $\sigma_{\max}(\Delta) = \max\cbra{\lambda_{\max}(\Delta),\lambda_{\max}(-\Delta)}$.
\end{enumerate}
However, in general, preparing such a state in the top eigenspace appears to require additional witness information. Consequently, this approach seems to yield only a \QMA{} upper bound, as sketched in~\cite[Footnote 8]{LW25Lalpha}. To turn this idea into an actual efficient algorithm, we must therefore resolve two issues: (i) find a quantum state with \emph{sufficiently large} overlap with the top eigenspace of $\Delta$; and (ii) identify the appropriate version of phase estimation. 

\paragraph{The input states provide a warm start when one state is pure.}
We first address the former issue in the setting where one state is pure, and our construction even extends naturally to estimating the trace distance. Previous optimal quantum query and sample algorithms for estimating the trace distance between two pure states $\ket{\psi_0}$ and $\ket{\psi_1}$, developed in~\cite{WZ24c,Wang24pureQSD} (and implicitly in~\cite{FW25}), rely crucially on the following identity, which is implicit in~\cite{FvdG99} and can be viewed as capturing the \emph{two-dimensional geometry of pure states}:
\begin{equation}
    \label{eq:pure-state-traceDist-fidelity}
    \td(\ketbra{\psi_0}{\psi_0},\ketbra{\psi_1}{\psi_1})^2 + \F(\ketbra{\psi_0}{\psi_0},\ketbra{\psi_1}{\psi_1})^2 = 1.
\end{equation}

Since the square of the fidelity, $\F(\ketbra{\psi_0}{\psi_0},\ketbra{\psi_1}{\psi_1})^2$, between pure states is simply the overlap $\abs{\innerprod{\psi_0}{\psi_1}}^2$ and the largest singular value of $\ketbra{\psi_0}{\psi_0}-\ketbra{\psi_1}{\psi_1}$ is 
\[ \sigma_{\max}(\ketbra{\psi_0}{\psi_0}-\ketbra{\psi_1}{\psi_1}) = \sqrt{1-\abs{\innerprod{\psi_0}{\psi_1}}^2}, \]
we obtain the following identity relating the trace distance and the operator norm distance: 
\begin{equation}
    \label{eq:pure-state-traceDist-alternative}
    \td(\ketbra{\psi_0}{\psi_0},\ketbra{\psi_1}{\psi_1}) = \sigma_{\max}(\ketbra{\psi_0}{\psi_0}-\ketbra{\psi_1}{\psi_1}) = 2 \cdot \Tinfty(\ketbra{\psi_0}{\psi_0},\ketbra{\psi_1}{\psi_1}).
\end{equation}

While \Cref{eq:pure-state-traceDist-fidelity} is specific to the case in which both states are \emph{pure}, the coincidence between the trace distance and the operator norm distance in \Cref{eq:pure-state-traceDist-alternative} extends to the more general setting in which only one state is pure, as proven in~\cite[Theorem 18(2)]{LW25}:
\begin{equation}
    \label{eq:one-pure-state-traceDist}
    \td(\rho,\ketbra{\psi}{\psi}) = 2\cdot \Tinfty(\rho,\ketbra{\psi}{\psi}).
\end{equation}

Therefore, in the single-pure-state setting, estimating the trace distance is \emph{equivalent} to estimating the operator norm distance. We now show that there exists a quantum state with sufficiently large overlap with the top eigenspace of $\Delta$, as established by our \emph{single-pure-state warm-start lemma} (\Cref{lemma:one-pure-state-warm-start}), which shows that $\ket{\psi}$ itself already provides a \emph{warm start}. Indeed, assuming without loss of generality that $\rho_0=\ketbra{\psi}{\psi}$ and $\rho_1=\rho$, the state $\ket{\psi}$ has squared overlap at least $1/2$ with the eigenvector $\ket{v_+}$ corresponding to the \emph{unique} positive eigenvalue $\lambda_{\max}(\Delta)$, and therefore the largest eigenvalue, of $\Delta$:
\begin{equation}
    \label{eq:warm-start-lemma}
    \abs*{\innerprod{v_+}{\psi}}^2 \geq \frac{1}{2}+\norm{\Delta}_\infty > \frac{1}{2}. 
\end{equation}

Combining this warm start with an appropriate version of phase estimation, we obtain the $O(1/\epsilon)$ quantum query complexity upper bound, matching the pure-vs.-pure case in~\cite{Wang24pureQSD}. 
The underlying reason is essentially a \emph{rank-one perturbation} phenomenon: since $\rho$ is positive semidefinite, the restriction of $\Delta$ to the subspace orthogonal to $\ket{\psi}$ is negative semidefinite. Consequently, any eigenvector corresponding to a positive eigenvalue must have substantial overlap with $\ket{\psi}$, and in particular $\Delta$ has \emph{exactly one} positive eigenvalue when $\Delta \neq 0$.
This fact also explains the intuition behind \Cref{eq:pure-state-traceDist-alternative,eq:one-pure-state-traceDist}. 

Furthermore, the proof of this warm-start lemma relies crucially on the fact that $\rho^2 \preceq \rho$, which becomes an equality when $\rho$ is pure. Therefore, the bound in \Cref{eq:warm-start-lemma} is tight, and is already saturated when both the states are pure. In this sense, this rank-one perturbation phenomenon can be viewed as a natural counterpart to \Cref{eq:pure-state-traceDist-fidelity} when the assumption that both states are pure is relaxed to the case where only one state is pure. 

\paragraph{From na\"ive phase estimation to maximum phase estimation.}
To resolve the latter issue, one might first consider the na\"ive approach of combining Kitaev's standard phase estimation~\cite{Kitaev95} with Hamiltonian simulation, such as that in~\cite{BCK15}: namely, one simulates $e^{it\Delta}$ and recovers an extremal eigenvalue of $\Delta$ from the corresponding eigenphase. While this approach is conceptually natural, it is not well suited to our setting. Indeed, estimating an eigenvalue to within additive error $\epsilon$ in this way requires resolving phases to precision $O(\epsilon)$, and thus using simulation time $t=\Theta(1/\epsilon)$. Implementing such a long-time evolution then incurs an additional factor proportional to $t$, yielding only a $\widetilde{O}(1/\epsilon^2)$-type quantum query upper bound, which is quadratically worse than the optimal bound in \Cref{thm-informal:OnePureQSD-query}. 

The right choice, instead, is to work with a unitary dilation of $\Delta$ and then apply maximum phase estimation~\cite{MdW23} (see also \Cref{lem:max_phase_estimation}) after qubitization~\cite{LC19} (see \Cref{lemma:qubitization} for the specific version we need). Under the resulting monotone correspondence between eigenvalues and eigenphases, the largest eigenvalue of $\Delta$ is converted into the largest eigenphase of the associated unitary. Hence, the problem becomes that of estimating a maximum phase from a state having nontrivial overlap with the top eigenspace. 

Consequently, in the single-pure-state setting, the warm-start lemma (\Cref{lemma:one-pure-state-warm-start}) provides exactly such a state with constant overlap, and this is precisely what enables the maximum-phase-estimation primitive to recover $\lambda_{\max}(\Delta)$, and hence both $\Tinfty(\rho,\ketbra{\psi}{\psi})$ and $\td(\rho,\ketbra{\psi}{\psi})$, within additive error $\epsilon$ using $O(1/\epsilon)$ queries, as stated in \Cref{thm-informal:OnePureQSD-query}. 

\paragraph{The input states provide an eigenvalue-scaled overlap.}
While the same version of phase estimation can also be used to estimate the operator norm distance in the general setting, the key remaining issue is to identify a quantum state with sufficiently large overlap with the top eigenspace of $\Delta = (\rho_0-\rho_1)/2$. 

To this end, rather than using the input states $\rho_0$ and $\rho_1$ directly, we work with their purifications $\ket{\psi_0}$ and $\ket{\psi_1}$. Let $\Pi^{(\Delta)}_{\max}$ denote the projector onto the top eigenspace of $\Delta$, which may have rank greater than $1$ since the top eigenvalue of $\Delta$ need not be \emph{unique}.
Then, the overlap between $\ket{\psi_0}$ and this eigenspace, defined via the Euclidean vector norm $\norm{\cdot}_2$, is
\begin{equation}
    \label{eq:topEigenspace-overlap-def}
    \norm[\big]{\ab\big(I\otimes \Pi^{(\Delta)}_{\max})\ket{\psi_0}}_2^2 = \tr\ab\big(\Pi^{(\Delta)}_{\max}\rho_0).
\end{equation}

It is also worth noting that \Cref{eq:topEigenspace-overlap-def} already captures the warm-start phenomenon in the single-pure-state setting (cf.~\Cref{lemma:one-pure-state-warm-start}). Indeed, when $\rho_0=\ketbra{\psi}{\psi}$ is pure, \Cref{eq:warm-start-lemma} immediately yields the following \emph{constant}-overlap bound:
\begin{equation}
    \label{eq:topEigenspace-overlap-onePure}
    \norm[\big]{\ab\big(I\otimes \Pi^{(\Delta)}_{\max})\ket{\psi_0}}_2^2 = \tr\ab\big(\Pi^{(\Delta)}_{\max}\rho_0)  \geq \frac{1}{2}.
\end{equation} 

When neither $\rho_0$ nor $\rho_1$ is pure, however, this constant-overlap argument from \Cref{eq:topEigenspace-overlap-onePure} no longer applies. Instead, using the projection identity $\Pi^{(\Delta)}_{\max} \Delta \Pi^{(\Delta)}_{\max} = \lambda_{\max}(\Delta) \Pi^{(\Delta)}_{\max}$, together with the overlap formula in \Cref{eq:topEigenspace-overlap-def}, we obtain the weaker but fully general lower bound established in our \emph{top eigenspace overlap lemma} (\Cref{lem:top_eigenspace_overlap}): 
\begin{equation}
    \label{eq:topEigenspace-overlap-bound}
    \norm[\big]{\ab\big(I\otimes \Pi^{(\Delta)}_{\max})\ket{\psi_0}}_2^2 \geq 2 \tr\ab\big(\Pi^{(\Delta)}_{\max}\Delta) = 2 \lambda_{\max}(\Delta) \tr\ab\big(\Pi^{(\Delta)}_{\max}) \geq 2\lambda_{\max}(\Delta).
\end{equation}

Consequently, unlike in the single-pure-state setting, there is in general no constant-overlap warm start. Since the overlap guarantee in \Cref{eq:topEigenspace-overlap-bound} scales only with $2\lambda_{\max}(\Delta)$, the success probability of a single run can be as small as $O(\epsilon)$ at the target precision. To boost the success probability to a constant, say $2/3$, we therefore combine this procedure with amplitude amplification and estimation~\cite{BHMT02} (as in maximum phase estimation~\cite{MdW23}). This subtlety incurs an additional multiplicative factor of $\widetilde{O}(1/\sqrt{\epsilon})$, leading to the desired quantum query upper bound $\widetilde{O}(1/\epsilon^{3/2})$ in \Cref{thm-informal:QSDinfty-query}. 

\subsection{Discussion and open problems}

A natural question is whether our warm-start lemma (see \Cref{lemma:one-pure-state-warm-start}) can be extended to a more general setting. Here, we provide a simple example showing that such a warm start already fails for a pair of rank-two states, thereby revealing an immediate obstruction:
\[ \rho_0^{(\eta)} \coloneqq (1-\eta) \ketbra{0}{0} + \eta\ketbra{1}{1} \quad\text{ and }\quad \rho_1^{(\eta)} \coloneqq (1-\eta)\ketbra{0}{0} + \eta\ketbra{2}{2}. \]

A direct calculation shows that $\Tinfty\rbra[\big]{\rho_0^{(\eta)},\rho_1^{(\eta)}}=\eta/2$ and $\td\rbra[\big]{\rho_0^{(\eta)},\rho_1^{(\eta)}}=\eta$. Moreover, the extremal positive and negative eigenvalues of $\rho_0^{(\eta)}-\rho_1^{(\eta)}$ are $\eta$ and $-\eta$, respectively, with the corresponding eigenvectors $\ket{1}$ and $\ket{2}$. Consequently, the overlaps between these eigenvectors and the input states are 
$\bra{1} \rho_0^{(\eta)} \ket{1} = \eta$ and $\bra{2} \rho_1^{(\eta)} \ket{2} = \eta$.
Since $\eta$ can be chosen arbitrarily small, there is \emph{no constant $c>0$} such that every pair of rank-two states admits an extremal eigenvector of $\rho_0-\rho_1$ with overlap at least $c$ with one of the input states, or even with a linear combination of them. This motivates our first open problem:
\begin{enumerate}[label={\upshape(\roman*)}]
    \item Can one characterize the largest family of mixed states for which one can still prepare a state that has constant overlap with the top eigenspace of $\rho_0-\rho_1$?
\end{enumerate}

A second direction concerns optimality. Our current bounds in \Cref{thm-informal:QSDinfty-query} for the general setting still leave a square-root gap, leading to the following question: 
\begin{enumerate}[label={\upshape(\roman*)}]
    \setcounter{enumi}{1}
    \item Can the quantum query and sample complexity of estimating the operator norm distance be made optimal? More generally, can analogous optimal results be achieved for estimating the quantum $\ell_\alpha$ distance for real $\alpha>1$~\cite{LW25Lalpha}?
\end{enumerate}

\subsection{Related works}
\label{sec:related-works}

Quantum state tomography has been studied with respect to trace distance, fidelity (Bures distance), Frobenius distance (Hilbert--Schmidt distance), and operator norm distance in the literature \cite{HHJ+17,OW16,CHL+23,vACGN23,PSW25,SSW25,PSTW25}.
The closeness testing of quantum states, also known as the quantum state certification, was studied with respect to trace distance, fidelity, $\ell_2$ distance, and $\ell_3$ distance in \cite{BOW19,GL20}. 
The tolerant closeness testing, which is equivalent to closeness estimation, was studied for trace distance, fidelity, and $\ell_\alpha$ distance in \cite{CCC19,WZC+23,GP22,WGL+22,WZ24,LGLW23}. 

\section{Preliminaries}

We assume that the reader is familiar with the basics of quantum computation and quantum information. For background on these subjects, we refer to the standard textbook~\cite{NC10}.

Throughout the paper, we use the following notation. First, for any positive integer $n$, let $\sbra{n} \coloneqq \cbra{1,2,\dots,n}$. Second, $\widetilde{O}(f)$ stands for $O(f\polylog(f))$, whereas $\widetilde{\Omega}(f)$ stands for $\Omega(f/\polylog(f))$. Finally, for a matrix $A$, the Schatten $\alpha$-norm is defined by
\[
\Abs{A}_{\alpha} \coloneqq \rbra[\big]{\tr\rbra{\abs{A}^{\alpha}}}^{1/\alpha} = \rbra*{\tr\rbra[\Big]{\rbra[\big]{A^{\dagger} A}^{\alpha/2}}}^{1/\alpha}. 
\]

\subsection{Closeness measures for quantum states}

We begin by defining the trace distance: 
\begin{definition}[Trace distance]
    Let $\rho_0$ and $\rho_1$ be two quantum states on the same Hilbert space. The trace distance between $\rho_0$ and $\rho_1$ is defined by
    \[\td(\rho_0,\rho_1) \coloneqq \frac{1}{2} \norm{\rho_0-\rho_1}_1 = \frac{1}{2}\tr\rbra{|\rho_0-\rho_1|}.\]
\end{definition}

Next, we define the operator norm distance, a quantity that has also been used in~\cite{GKKT20,vACGN23}, where the corresponding definition is only implicit, while an explicit formal definition is given as a special case of~\cite[Definition 2.3]{LW25Lalpha}:
\begin{definition}[Operator norm distance]
    \label{def:operator-norm-distance}
    Let $\rho_0$ and $\rho_1$ be two quantum states on the same Hilbert space. 
    The operator norm distance between $\rho_0$ and $\rho_1$ is defined by
    \[\Tinfty(\rho_0,\rho_1) \coloneqq \frac{1}{2} \norm*{\rho_0 - \rho_1}_{\infty} = \frac{1}{2} \max\cbra*{ \lambda_{\max}(\rho_0 - \rho_1), \lambda_{\max}(\rho_1 - \rho_0) }.\]
    Here, $\lambda_{\max}(A)$ denotes the largest eigenvalue of the matrix $A$. 
\end{definition}

Notably, the operator norm distance arises as the endpoint case ($\alpha=\infty$) of the quantum $\ell_{\alpha}$ distance~\cite{LW25Lalpha}, which generalizes both the trace distance ($\alpha=1$) and the Hilbert-Schmidt distance ($\alpha=2$) via the Schatten norm. In addition, both the trace distance and the operator norm distance are distance metrics, and hence satisfy the triangle inequality, as stated in~\cite[Lemma 2.4]{LW25Lalpha} (see also~\cite[Proposition 1.16]{AS17}). 

Moreover, the inequalities that relate the operator norm distance to the trace distance were established in~\cite{LW25Lalpha}, as stated in \Cref{lemma:Tinf-vs-T}. Interestingly, when one of the states is pure, the trace distance is \emph{exactly} twice the operator norm distance. 

\begin{lemma}[$\Tinfty$ vs.~$\td$, adapted from~{\cite[Theorem 4.2(2)]{LW25Lalpha}}]
    \label{lemma:Tinf-vs-T}
    Let $\rho_0$ and $\rho_1$ be quantum states on the same Hilbert space. Then, the following bounds hold:  
    \[2 \cdot \Tinfty(\rho_0,\rho_1) \leq \td(\rho_0,\rho_1) \leq 2 \min\cbra*{\rank(\rho_0),\rank(\rho_1)} \cdot \Tinfty(\rho_0,\rho_1).\]
    Furthermore, when $\rho_0=\ketbra{\psi}{\psi}$ and $\rho_1 = \rho$, the inequality becomes an identity: 
    \[2 \cdot \Tinfty(\ketbra{\psi}{\psi},\rho) = \td(\ketbra{\psi}{\psi},\rho). \]
\end{lemma}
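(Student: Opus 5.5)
The plan is to work with the Hermitian, traceless operator $D \coloneqq \rho_0-\rho_1$. Its spectral decomposition is $D = \sum_i \lambda_i \ketbra{u_i}{u_i}$ with $\sum_i \lambda_i = 0$.

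\emph{Lower bound.} Tracelessness forces $\sum_{\lambda_i>0}\lambda_i = \sum_{\lambda_i<0}\abs{\lambda_i} = \frac12\norm{D}_1 = \td(\rho_0,\rho_1)$. The largest positive eigenvalue is at most the sum of all positive eigenvalues, and the largest $\abs{\cdot}$ among negative eigenvalues is at most the sum of their magnitudes. Hence
\[
\norm{D}_\infty \le \td(\rho_0,\rho_1),
\]
which is exactly $2\Tinfty(\rho_0,\rho_1)\le \td(\rho_0,\rho_1)$.

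\emph{Upper bound.} Let $P_+$ be the projector onto the positive eigenspace of $D$ and $P_-$ the projector onto the negative eigenspace. Set $r_0=\rank(\rho_0)$ and $r_1=\rank(\rho_1)$.

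\begin{itemize}
\item First I will show $\rank(P_+)\le r_0$. On $\ker\rho_0$ we have $D = -\rho_1 \preceq 0$. If $\rank(P_+)>r_0$, then $\operatorname{range}(P_+)$ would intersect $\ker\rho_0$ nontrivially, by a dimension count. A nonzero vector $v$ in that intersection would satisfy $\bra{v}D\ket{v}>0$ and $\bra{v}D\ket{v}\le 0$ simultaneously, a contradiction.
\item Symmetrically, $\rank(P_-)\le r_1$.
\item Then
\[
\td(\rho_0,\rho_1)=\sum_{\lambda_i>0}\lambda_i \le r_0\,\norm{D}_\infty,
\qquad
\td(\rho_0,\rho_1)=\sum_{\lambda_i<0}\abs{\lambda_i}\le r_1\,\norm{D}_\infty .
\]
Taking the better of the two gives $\td\le 2\min\{r_0,r_1\}\Tinfty$.
\end{itemize}

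\emph{Pure case.} Take $\rho_0=\ketbra{\psi}{\psi}$, so $r_0=1$. The previous step gives $\td\le \norm{D}_\infty = 2\Tinfty$. Combined with the general lower bound, this yields equality.

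The only mildly delicate step is the rank bound on the positive eigenspace. The intersection/dimension argument above handles it cleanly; equivalently, it is Weyl's inequality for subtracting a PSD matrix, applied to the at most $r_0$ nonzero eigenvalues of $\rho_0$. Everything else is bookkeeping using $\tr D=0$.
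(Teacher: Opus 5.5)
Your proof is correct. The paper does not prove this lemma itself; it imports it from \cite{LW25Lalpha}. Your argument is, however, the rank-$r$ version of the reasoning the paper uses in \Cref{lemma:one-pure-state-warm-start} and \Cref{lem:T_inf_first_pure}. There, negative semidefiniteness of $\Delta$ on $\ket{\psi}^\perp$ plus a dimension count gives at most one positive eigenvalue. Tracelessness then forces $\lambda_{\max}(\Delta)=\norm{\Delta}_\infty$, which is exactly your pure case, with $\ker\rho_0=\ket{\psi}^\perp$.

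Your general version does two things:
\begin{itemize}
\item Tracelessness gives $\td(\rho_0,\rho_1)=\sum_{\lambda_i>0}\lambda_i=\sum_{\lambda_i<0}\abs{\lambda_i}\geq\norm{D}_\infty$.
\item Replacing $\ket{\psi}^\perp$ by $\ker\rho_0$ (equivalently, Weyl's monotonicity $\lambda_k^{\downarrow}(\rho_0-\rho_1)\le\lambda_k^{\downarrow}(\rho_0)$) shows that $D$ has at most $\rank(\rho_0)$ positive eigenvalues and at most $\rank(\rho_1)$ negative ones.
\end{itemize}
Together these give the upper bound, and the identity whenever either state is pure, without needing to know which one. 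Nothing is missing.
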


\subsection{Closeness testing of quantum states via state-preparation circuits}
\label{subsec:state-closeness-testing}

We first define closeness testing for quantum states with respect to the quantum $\ell_\alpha$ distance, which generalizes the \textsc{Quantum State Distinguishability Problem} (\QSD{}) introduced in~\cite[Section 3.3]{Wat02}, as stated in \Cref{def:QSDalpha}. We denote this promise problem by $\QSDalpha[a,b]$ and also consider the following two restricted variants:   
\begin{definition}[Quantum State Distinguishability Problem with Schatten $\alpha$-norm, \QSDalpha{}, adapted from~{\cite[Definition 17]{LW25Lalpha}}]
	\label{def:QSDalpha}
    Let $Q_0$ and $Q_1$ be quantum circuits acting on $m$ qubits (``input length'') and having $n$ specified output qubits (``output length''), where $m(n)$ is polynomial in $n$. Let $\rho_i$ denote the quantum state obtained by running $Q_i$ on the state $\ket{0}^{\otimes m}$ and tracing out the non-output qubits. Let $a(n)$ and $b(n)$ be efficiently computable functions. The task is to decide whether: 
	\begin{itemize}
		\item \emph{Yes:} The pair of quantum circuits $(Q_0,Q_1)$ satisfies $\Talpha(\rho_0,\rho_1) \geq a(n)$; 
		\item \emph{No:} The pair of quantum circuits $(Q_0,Q_1)$ satisfies $\Talpha(\rho_0,\rho_1) \leq b(n)$.
	\end{itemize}

    \noindent Furthermore, we consider two restricted versions:
    \begin{enumerate}[label={\upshape(\arabic*)}, topsep=0.33em, itemsep=0.33em, parsep=0.33em]
        \item \PureQSD{}\emph{:} Both $\rho_0$ and $\rho_1$ are pure states, and the input length may be larger than the output length. 
        \item \OnePureQSD{}\emph{:} At least one of $\rho_0$ and $\rho_1$ is a pure state, without prior knowledge of which input circuit prepares the pure state. 
    \end{enumerate}
\end{definition}

In this work, we consider the \textit{purified quantum access input model}, as originally introduced in~\cite{Wat02}, in both white-box and black-box settings: 
\begin{itemize}
    \item \textbf{White-box model}: The input of the problem \QSDalpha{} consists of descriptions of quantum circuits $Q_0$ and $Q_1$ that are \emph{ polynomial-size}. In other words, each circuit description consists of a polynomial number of one- and two-qubit gates.
    \item \textbf{Black-box model}: Instead of having explicit circuit descriptions, one is given only query access to $Q_b$, $Q_b^{\dagger}$, controlled-$Q_b$, and controlled-$Q_b^\dagger$, for $b\in\binset$. 
    For convenience, we use ``a query to $Q_b$'' to refer to a query to $Q_b$,  $Q_b^{\dagger}$, controlled-$Q_b$, and controlled-$Q_b^\dagger$.
\end{itemize}

In addition to query complexity, which is defined in the black-box model, \textit{sample complexity} refers to the number of state copies of $\rho_0$ and $\rho_1$ required to test the closeness between these states with respect to the chosen closeness measure. 

\subsection{Unitary dilation}

We will use the notion of unitary dilation, which may be viewed as an \emph{error-free} version of block-encoding with normalization factor $1$~\cite{LC19,GSLW19}.

\begin{definition} [Unitary dilation]
    Let $A$ be an $n$-qubit operator on subsystem $\mathsf{A}$ with $\Abs{A}_{\infty} \leq 1$. 
    An $\rbra{n+m}$-qubit unitary operator $U$ on subsystem $\mathsf{AB}$ is said to be a unitary dilation of $A$ with $m$ ancillary qubits, if $\Pi U \Pi = A_{\mathsf{A}} \otimes \ketbra{0}{0}_{\mathsf{B}}$, where $\Pi = I_{\mathsf{A}} \otimes \ketbra{0}{0}_{\mathsf{B}}$. 
\end{definition}

For our purpose and completeness, we present a simple implementation of a unitary dilation of $\rbra{\rho_0-\rho_1}/2$ for any two quantum states $\rho_0$ and $\rho_1$.

\begin{lemma}[Unitary dilation from state-preparation circuits] \label{lemma:dilation-Delta}
    Given quantum unitary oracles $Q_0$ and $Q_1$ that prepare the purifications of two unknown quantum states $\rho_0$ and $\rho_1$, respectively, we can implement a unitary dilation $U_\Delta$ of $\Delta \coloneqq \rbra{\rho_0-\rho_1}/2$ using $1$ query to each of \mbox{controlled-$Q_0$}, \mbox{controlled-$Q_1$}, and their inverses. 
    Moreover, if each of $Q_0$ and $Q_1$ uses $a$ ancillary qubits, then $U_\Delta$ uses $\rbra{2a+1}$ ancillary qubits. 
\end{lemma}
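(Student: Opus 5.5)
The plan is to combine two standard ingredients: a block-encoding of each density matrix built from its purification, and a one-qubit linear combination of unitaries (LCU) that takes half their difference.

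\textbf{Step 1: block-encoding each state.} Let $Q_b$ act on $\mathsf{A}\otimes\mathsf{C}$, where $\mathsf{C}$ has $a$ qubits. It prepares
\[
\ket{\psi_b}=Q_b\ket{0}_{\mathsf{A}}\ket{0}_{\mathsf{C}},\qquad \tr_{\mathsf{C}}\ketbra{\psi_b}{\psi_b}=\rho_b .
\]
Following the purified density matrix technique of \cite{LC19}, work on $\mathsf{A}\otimes\mathsf{A}'\otimes\mathsf{C}$, where $\mathsf{A}'$ is an $n$-qubit register. Define
\[
W_b \coloneqq (Q_b^\dagger)_{\mathsf{A}'\mathsf{C}}\,\mathrm{SWAP}_{\mathsf{A}\mathsf{A}'}\,(Q_b)_{\mathsf{A}'\mathsf{C}} .
\]
For any $\ket{x},\ket{y}\in\mathsf{A}$, a direct computation gives
\[
\bra{y}\bra{0}_{\mathsf{A}'\mathsf{C}}\,W_b\,\ket{x}\ket{0}_{\mathsf{A}'\mathsf{C}}
=\bra{\psi_b}\bigl(\ket{y}\!\bra{x}\otimes I\bigr)^{\mathrm{SWAP}}\ket{\psi_b}
=\bra{y}\rho_b\ket{x}.
\]
Hence $W_b$ is a unitary dilation of $\rho_b$ with $n+a$ ancillary qubits.

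\textbf{Ancilla count.} The target is $2a+1$ ancillas, which is not $n+a$, so the intended accounting must differ. Most likely "ancillary qubits" of $Q_b$ counts all qubits beyond the $n$ output qubits, and the $n$-qubit copy register is not counted separately. A cleaner reading that matches $2a+1$ is this: take the dilation register to be the purification's own workspace, so each $W_b$ uses $a$ fresh ancillas beyond the system. Use separate ancilla registers $\mathsf{C}_0$ and $\mathsf{C}_1$ for the two branches, plus one LCU qubit, for $2a+1$ in total. I would fix the convention so that the count comes out to $2a+1$, stating explicitly which registers are shared.

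\textbf{Step 2: LCU for the difference.} Introduce a control qubit $\mathsf{R}$ and set
\[
U_\Delta \coloneqq (H\otimes I)\,\bigl(\ketbra{0}{0}_{\mathsf{R}}\otimes W_0-\ketbra{1}{1}_{\mathsf{R}}\otimes W_1\bigr)\,(H\otimes I).
\]
The minus sign on the second branch is a $Z$ gate on $\mathsf{R}$. Projecting $\mathsf{R}$ and all other ancillas onto $\ket{0}$ gives
\[
\tfrac12\bigl(\rho_0-\rho_1\bigr)\otimes\ketbra{0}{0}=\Delta\otimes\ketbra{0}{0},
\]
because $\bra{+}\bigl(\ketbra{0}{0}\otimes A-\ketbra{1}{1}\otimes B\bigr)\ket{+}=(A-B)/2$. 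Controlled-$W_b$ needs only controlled-$Q_b$ and controlled-$Q_b^\dagger$: SWAP need not be controlled, since $Q_b^\dagger Q_b=I$ when the control is off. So each of controlled-$Q_0$, controlled-$Q_1$, and their inverses is used exactly once. Unitarity holds because $U_\Delta$ is a product of unitaries.

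\textbf{Main obstacle.} The main obstacle is only the register bookkeeping in Step 1: choosing which registers count as ancillas so that the total is exactly $2a+1$.
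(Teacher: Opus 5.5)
\textbf{Gap: the ancilla count.} Your plan is the paper's plan: the purified-density-matrix block-encoding followed by a one-qubit LCU. Both of your computations are right. $W_b$, with $Q_b$ acting on a fresh copy register $\mathsf{A}'$ and the SWAP between the system $\mathsf{A}$ and $\mathsf{A}'$, has top-left block $\rho_b$. Conjugating $\ketbra{0}{0}\otimes W_0-\ketbra{1}{1}\otimes W_1$ by $H$ gives $\Delta$, using one query to each of controlled-$Q_0$, controlled-$Q_1$ and their inverses.

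The gap is the ancilla clause. Your construction needs $n+a$ ancillas per branch, and picking a convention so that the count reads $2a+1$ is not a proof. The reading you sketch would let $Q_b$ act on the system register itself, so that only its $a$-qubit workspace $\mathsf{B}$ needs a fresh partner $\mathsf{C}$. That does not dilate $\rho_b$. For $U_b=(Q_b^\dagger)_{\mathsf{AB}}\,\mathrm{SWAP}_{\mathsf{BC}}\,(Q_b)_{\mathsf{AB}}$ and $\Pi=I_{\mathsf{A}}\otimes\ketbra{0}{0}_{\mathsf{BC}}$ one computes
\[
\Pi U_b\Pi=\bigl(\bra{0}_{\mathsf{B}}\,Q_b^\dagger\,(I_{\mathsf{A}}\otimes\ketbra{0}{0}_{\mathsf{B}})\,Q_b\,\ket{0}_{\mathsf{B}}\bigr)\otimes\ketbra{0}{0}_{\mathsf{BC}}.
\]
When $Q_b=V\otimes I_{\mathsf{B}}$ prepares a pure state, this equals $I_{\mathsf{A}}\otimes\ketbra{0}{0}_{\mathsf{BC}}$, not $\rho_b$. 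This is exactly the construction in the paper's own proof, so the paper does not actually establish $2a+1$ either. Your instinct that the $n$-qubit copy register cannot be avoided on this route is correct.

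What this route honestly gives is $n+a+1$ ancillas. Share $\mathsf{A}'$ and $\mathsf{C}$ between the two branches and add the LCU qubit; separate registers $\mathsf{C}_0,\mathsf{C}_1$ only raise the count. This is at most $2a+1$ exactly when $a\ge n$. It changes none of the query bounds that use the lemma, so you should state that count rather than adjust conventions.

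\textbf{Smaller error: leaving SWAP uncontrolled.} Your stated reason is wrong. With the control off, $(\text{c-}Q_b^\dagger)\,\mathrm{SWAP}\,(\text{c-}Q_b)$ acts as $\mathrm{SWAP}$, not $I$. The correct reason is that both branches apply the same $\mathrm{SWAP}_{\mathsf{AA}'}$, so
\[
\ketbra{0}{0}\otimes W_0+\ketbra{1}{1}\otimes W_1=(\ketbra{0}{0}\otimes Q_0^\dagger+\ketbra{1}{1}\otimes Q_1^\dagger)\,\mathrm{SWAP}_{\mathsf{AA}'}\,(\ketbra{0}{0}\otimes Q_0+\ketbra{1}{1}\otimes Q_1),
\]
with the minus sign supplied by $Z$ on $\mathsf{R}$. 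In any case, controlling the SWAP costs no queries, so the query count is unaffected.
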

\begin{proof}
    First, we can implement the unitary dilations $U_0$ and $U_1$ of $\rho_0$ and $\rho_1$, respectively, by the purified density matrix technique \cite[Lemma 7]{LC19}, where $U_b$ uses $2$ queries to $Q_b$ for each $b \in \cbra{0, 1}$. 
    Specifically, suppose that $\rho_0$ and $\rho_1$ are $n$-qubit quantum states. 
    Without loss of generality, we assume that $Q_0$ and $Q_1$ are $\rbra{n+a}$-qubit unitary operators such that 
    \[
    \rbra{Q_b}_{\mathsf{AB}}\ket{0}_{\mathsf{A}}\ket{0}_{\mathsf{B}} = \ket{\psi_b}_{\mathsf{AB}}, \quad \rho_b = \tr_{\mathsf{B}}\rbra{\ketbra{\psi_b}{\psi_b}_{\mathsf{AB}}},
    \]
    where subsystem $\mathsf{A}$ consists of $n$ qubits and subsystem $\mathsf{B}$ consists of $a$ qubits. 
    Let $\mathsf{C}$ be a subsystem consisting of another $a$ qubits. 
    Then, the following $\rbra{n+2a}$-qubit unitary operator
    \[
    U_b = \rbra[\Big]{ \rbra{Q_b^\dag}_{\mathsf{AB}} \otimes I_{\mathsf{C}} } \rbra[\Big]{ I_{\mathsf{A}} \otimes \textup{SWAP}_{\mathsf{BC}} } \rbra[\Big]{ \rbra{Q_b}_{\mathsf{AB}} \otimes I_{\mathsf{C}} }
    \]
    is a unitary dilation of $\rho_b$ with $2a$ ancillary qubits such that $\Pi U_b \Pi = \rbra{\rho_b}_{\mathsf{A}} \otimes \ketbra{0}{0}_{\mathsf{BC}}$, where $\Pi = I_{\mathsf{A}} \otimes \ketbra{0}{0}_{\mathsf{BC}}$. 

    Next, by the linear-combination-of-unitaries technique \cite{CW12} and introducing a new subsystem $\mathsf{D}$ consisting of $1$ qubit, the following $\rbra{n+2a+1}$-qubit unitary operator
    \[
    U_\Delta = \rbra[\Big]{I_{\mathsf{ABC}} \otimes H_{\mathsf{D}}} \rbra[\Big]{ \rbra{U_0}_{\mathsf{ABC}} \otimes \ketbra{0}{0}_{\mathsf{D}} + \rbra{U_1}_{\mathsf{ABC}} \otimes \ketbra{1}{1}_{\mathsf{D}}} \rbra[\Big]{I_{\mathsf{ABC}} \otimes \rbra{HX}_{\mathsf{D}}}
    \]
    is a unitary dilation of $\rbra{\rho_0 - \rho_1}/2$ with $\rbra{2a+1}$ ancillary qubits, where $H$ is the Hadamard gate and $X$ is the Pauli-X gate. 
\end{proof}

\subsection{Quantum sample-to-query lifting}

We mention the quantum sample-to-query lifting method, which enables us to convert quantum query complexity to quantum sample complexity. 
This lifting method, used implicitly before in \cite{GP22,WZ24}, was pointed out in \cite{WZ23} and later refined in \cite{WZ24b,TWZ25}. 
Here, we use the version in \cite{TWZ25}. 

\begin{lemma}[Quantum sample-to-query lifting, adapted from {\cite[Theorem 1.5]{TWZ25}} and {\cite[Corollary 1.4]{CWZ25}}] \label{lemma:lifting}
    For any quantum algorithm $\mathcal{A}$ that uses $q$ queries to the state-preparation circuits of some unknown quantum states, there is another quantum algorithm $\mathcal{A}'$ using $O\rbra{q^2}$ samples of those unknown quantum states such that the output distribution of $\mathcal{A}'$ is $0.99$-close to that of $\mathcal{A}$. 
\end{lemma}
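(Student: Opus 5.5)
The plan is to reduce to the quantum samplizer of \cite{TWZ25}, after first observing that the algorithm $\mathcal{A}$ can be assumed to touch the unknown states only through unitary dilations of the density operators themselves.

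\textbf{Step 1 (reduction to dilation access).} For the algorithms in this paper, this is immediate. Our estimators use $Q_b$ only through the dilations $U_b = Q_b^\dagger\,\textup{SWAP}\,Q_b$ from the proof of \Cref{lemma:dilation-Delta}, and through the LCU combination $U_\Delta$ built from them. Each $U_b$ is an exact unitary dilation of $\rho_b$, so a $q$-query algorithm becomes a circuit that makes $O(q)$ uses of controlled unitary dilations of $\rho_0$ and $\rho_1$, interleaved with fixed unitaries.

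For a general $\mathcal{A}$, this is where I would invoke \cite[Corollary 1.4]{CWZ25}. The point is that a state-preparation circuit is not determined by $\rho_b$: any algorithm whose output distribution depends only on $\rho_0,\rho_1$, and not on the particular purifying unitary, can be simulated with a constant-factor overhead by one that accesses only dilations of $\rho_b$.

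\textbf{Step 2 (samplizing each dilation query).} Replace each use of a dilation of $\rho_b$ by a sample-based channel that approximates it. The basic primitive is density matrix exponentiation, which implements $e^{-i\rho_b t}$ to diamond error $\delta$ using $O(t^2/\delta)$ copies of $\rho_b$. Combining this with a QSVT-type conversion from Hamiltonian evolution to a block-encoding yields a channel that simulates a block-encoding of $\rho_b$. The optimal samplizer of \cite[Theorem 1.5]{TWZ25} sharpens this bookkeeping: a $q$-query circuit with overall diamond-norm error $\delta$ can be simulated using $\widetilde{O}(q^2/\delta)$ samples. The amortization comes from the error of each of the $q$ simulated queries scaling quadratically in its share of the evolution time.

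\textbf{Step 3 (error accumulation).} Fix $\delta = 0.01$. By subadditivity of diamond-norm errors under composition, the resulting algorithm $\mathcal{A}'$ has output distribution within total variation distance $0.01$ of that of $\mathcal{A}$. Its sample complexity is $O(q^2)$; the logarithmic factors are removed by the refined version of the samplizer in \cite{TWZ25,CWZ25}.

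\textbf{Main obstacle.} I expect the difficulty to lie in Step 2, specifically in achieving the $q^2$ scaling rather than $q^2\cdot\polylog$ or $q^3$. A naive per-query error budget of $\delta/q$ with density matrix exponentiation costs $q\cdot q/\delta$ for the evolution time alone, and then an additional polylog overhead for the block-encoding conversion. For the purposes of this paper I would therefore not reprove this step: I would cite the optimal samplizer as a black box, and check only that our estimators satisfy its hypotheses, namely that they access $Q_b$ solely via the dilation in \Cref{lemma:dilation-Delta}.
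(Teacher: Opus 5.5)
You are right that this lemma is only ever established by citation: the paper gives no argument beyond invoking \cite{TWZ25,CWZ25}. You also correctly flag the implicit hypothesis that the output of $\mathcal{A}$ must depend only on the states, not on the particular circuits. The genuine gap is the glue in your Step~1, which routes everything through dilation (block-encoding) access, and it fails in two places. First, the check you propose for this paper's estimators does not hold. Both algorithms call $\mathsf{MaxPhaseEst}(-W_{A_k}, I_{2a+2}\otimes Q_k, \cdot,\gamma)$. By \Cref{lem:max_phase_estimation}, this queries the state-preparation unitary $S=I_{2a+2}\otimes Q_k$ (and $S^\dagger$, for the amplitude amplification inside it) $O(1/\gamma)$ times, i.e.\ $O(1/\sqrt{\epsilon})$ times in \Cref{thm:query-general}. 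These calls prepare and reflect about the full purification $\ket{0}^{\otimes(2a+2)}\ket{\psi_k}$, purifying register included. They are purified-access queries, not uses of $U_b=Q_b^\dagger\,\textup{SWAP}\,Q_b$, so a samplizer for dilation queries does not cover them.

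Second, the reduction you ascribe to \cite{CWZ25} cannot hold as you state it, because purified access is not simulable by dilation-only access with constant overhead. Take $\rho_b=\ketbra{x}{x}$ with unknown $x\in\{0,1\}^n$. One query to $Q_b$ followed by a computational-basis measurement outputs $x$, and this output depends only on $\rho_b$. But an exact dilation of $\ketbra{x}{x}=(I-O_x)/2$ is built by LCU from one controlled query to the Grover oracle $O_x=I-2\ketbra{x}{x}$. Hence any dilation-only algorithm that recovers $x$, which must work for this particular dilation, needs $\Omega(2^{n/2})$ queries. More fundamentally, a reflection about a purification $\ket{\psi_b}$ depends on the purification and not only on $\rho_b$, so no access that is a function of $\rho_b$ alone can implement it. A valid lifting must instead use the invariance of $\mathcal{A}$'s output under the choice of purification, and it must consume copies of the states. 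This is why the paper applies the lifting in the form stated for $q$ queries to the state-preparation circuits themselves (with inverses and controlled versions), as in \cite{TWZ25,CWZ25}. It does not compose a purification-to-dilation step with a block-encoding samplizer. The log-free $O(q^2)$ bound at constant closeness is likewise part of that cited statement, not something your Step~2 bookkeeping delivers.
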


\section{Efficient quantum algorithms for estimating trace and operator norm distances when one state is pure}
\label{sec:one-state-pure}

This section focuses on the setting in which one of the two input states is pure. We first prove a structural warm-start property specific to this case in \Cref{subsec:constant-overlap-lemma}, then combine it with qubitization (\Cref{subsec:qubitization}) and maximum phase estimation to derive a quantum query algorithm in \Cref{subsec:query-one-state-pure}. Finally, the sample upper bound follows immediately in \Cref{subsec:sample-one-state-pure}.

\subsection{The input states provide a warm start when one state is pure}
\label{subsec:constant-overlap-lemma}

We begin with the structural statement that makes the single-pure-state setting easier:
\begin{lemma}[Single-pure-state warm-start lemma]
    \label{lemma:one-pure-state-warm-start}
    Let $\rho$ be a quantum state, and let $\ketbra{\psi}{\psi}$ be a pure state on the same Hilbert space. Define the Hermitian operator 
    $\Delta \coloneqq (\ketbra{\psi}{\psi}-\rho)/2.$ 
    Then either $\Delta = 0$, or $\Delta$ has the unique positive eigenvalue equal to $\norm{\Delta}_{\infty}$. If $\ket{v_+}$ is the unique unit eigenvector corresponding to this eigenvalue, then 
    \[ \abs*{\braket{v_+}{\psi}}^2 \geq \frac{1}{2}+\Abs{\Delta}_{\infty} > \frac{1}{2}.  \]
    Moreover, equality holds when $\rho$ is also pure and $\Delta\neq 0$. 
\end{lemma}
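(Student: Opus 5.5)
The plan is to prove the spectral claims by a rank-one perturbation argument and the overlap bound by a direct computation with the eigenvalue equation, using only $\rho^2 \preceq \rho$.

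\emph{Spectral structure.} Let $P \coloneqq I-\ketbra{\psi}{\psi}$. On the subspace $\ket{\psi}^{\perp}$ we have $P\Delta P = -P\rho P/2 \preceq 0$, so the quadratic form of $\Delta$ is nonpositive on a subspace of codimension one. By the Courant--Fischer min-max theorem, the second largest eigenvalue of $\Delta$ is at most $\max_{\ket{x}\perp\ket{\psi},\,\norm{x}_2=1}\bra{x}\Delta\ket{x}\le 0$, so $\Delta$ has at most one positive eigenvalue, counted with multiplicity. Next, $\tr\Delta = (1-1)/2 = 0$. Hence if $\Delta\neq 0$, $\Delta$ has a nonzero eigenvalue, and zero trace forces a positive one. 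Call it $\lambda_+$, with unit eigenvector $\ket{v_+}$; it is unique and simple. Zero trace also gives that the sum of the negative eigenvalues equals $-\lambda_+$, so every negative eigenvalue has modulus at most $\lambda_+$. Therefore $\lambda_+ = \norm{\Delta}_\infty$.

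\emph{Overlap bound.} Write $\lambda \coloneqq \lambda_+$, $c\coloneqq\braket{\psi}{v_+}$ and $p\coloneqq\abs{c}^2$. Rearranging $\Delta\ket{v_+}=\lambda\ket{v_+}$ gives
\[ \rho\ket{v_+} = c\ket{\psi} - 2\lambda\ket{v_+}. \]
From this I would compute two quantities:
\[ \bra{v_+}\rho\ket{v_+} = p-2\lambda, \qquad \bra{v_+}\rho^2\ket{v_+} = \norm{\rho\ket{v_+}}_2^2 = p + 4\lambda^2 - 4\lambda p, \]
where the cross terms in the second expression combine to $-4\lambda p$. Since the eigenvalues of a state lie in $[0,1]$, we have $\rho^2\preceq\rho$. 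This yields
\[ p+4\lambda^2-4\lambda p \le p-2\lambda, \]
that is, $4\lambda p \ge 4\lambda^2 + 2\lambda$. Dividing by $4\lambda>0$ gives $p\ge \frac12+\lambda = \frac12+\norm{\Delta}_\infty > \frac12$.

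\emph{Equality case.} If $\rho$ is pure, then $\rho^2=\rho$, so the only inequality used becomes an equality and $p = \frac12+\norm{\Delta}_\infty$ exactly (given $\Delta\ne0$).

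I expect the main obstacle to be choosing the right inequality in the overlap step. A naive bound on $\bra{v_+}\rho\ket{v_+}\ge0$ only gives $p\ge 2\lambda$, which is not constant. Comparing the first and second moments of $\rho$ along $\ket{v_+}$ via $\rho^2\preceq\rho$ is what produces the constant $\frac12$. This is also what makes the bound tight for pure $\rho$.
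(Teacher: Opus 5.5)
Your proposal is correct and follows essentially the same route as the paper. Negative semidefiniteness of $\Delta$ on $\ket{\psi}^{\perp}$ together with $\tr(\Delta)=0$ gives the spectral claims. The overlap bound then comes from comparing $\bra{v_+}\rho^2\ket{v_+}=\norm{c\ket{\psi}-2\lambda\ket{v_+}}_2^2$ with $\bra{v_+}\rho\ket{v_+}=p-2\lambda$ via $\rho^2\preceq\rho$. The only cosmetic difference is that you cite Courant--Fischer, whereas the paper argues directly that two independent positive eigenvectors would span a nonzero vector orthogonal to $\ket{\psi}$; this is the same codimension-one argument.
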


\begin{proof}
    For every pure state $\ket{\psi^\perp}$  orthogonal to $\ket{\psi}$, we have 
    \begin{equation}
        \label{eq:orthogonal-subspace-positive}
        \bra{\psi^\perp} \Delta \ket{\psi^\perp} = -\bra{\psi^\perp} \rho \ket{\psi^\perp}/2 \leq 0,
    \end{equation}
    since $\rho$ is positive semidefinite. Hence, the restriction of $\Delta$ to $\Pi_\perp$, namely 
    \[\Pi_\perp \Delta \Pi_\perp, \quad\text{where } \Pi_\perp \coloneqq I-\ketbra{\psi}{\psi},\] 
    is negative semidefinite. It follows that the positive eigenspace of $\Delta$ has dimension at most $1$: indeed, if $\ket{\phi_1}$ and $\ket{\phi_2}$ were two linearly independent positive-eigenvalue eigenvectors, then some nonzero linear combination of them would be orthogonal to $\ket{\psi}$, contradicting \Cref{eq:orthogonal-subspace-positive}.

    If $\Delta$ had no positive eigenvalue, then $\Delta$ is negative semidefinite, namely $\Delta\preceq 0$. Since $\tr(\Delta)=0$, this would force $\Delta=0$. Therefore, if $\Delta\neq 0$, then $\Delta$ has the unique positive eigenvalue, say $\tau > 0$. Since $\Delta$ is traceless, its negative eigenvalues sum to $-\tau$, and hence each of them is at least $-\tau$. Because $\Delta$ is Hermitian, we obtain: 
    \begin{equation}
        \label{eq:warm-start-neg-eigenval}
        \tau = \norm{\Delta}_\infty.
    \end{equation}
    
    Now let $\ket{v_+}$ be the unique unit eigenvector corresponding to the eigenvalue $\tau$. We have:
    \begin{equation}
        \label{eq:warm-start-decomp}
        \Delta\ket{v_+} = \tau\ket{v_+} \quad\text{and}\quad 2\cdot\Delta\ket{v_+} = \innerprod{\psi}{v_+} \ket{\psi}-\rho\ket{v_+}.
    \end{equation}
    
    Since $\rho$ is positive semidefinite and $\tr(\rho)=1$, we have $0 \preceq \rho \preceq I$, and hence $\rho^2 \preceq \rho$. Therefore, \Cref{eq:warm-start-decomp} implies the following:
    \begin{subequations}
    \label{eq:warm-start-squared-norm-a}
    \begin{align}
        \norm*{\rho \ket{v_+}}^2_2 &= \bra{v_+} \rho^2 \ket{v_+}\\
        &\leq \bra{v_+} \rho \ket{v_+} = \abs*{\innerprod{\psi}{v_+}}^2 - 2\tau.
    \end{align}
    \end{subequations}
    
    On the other hand, a direct calculation shows that
    \begin{subequations}
    \label{eq:warm-start-squared-norm-b}
    \begin{align}
        \norm*{\rho \ket{v_+}}^2_2 &= \norm*{\innerprod{\psi}{v_+} \ket{\psi}-2\tau \ket{v_+}}^2_2\\
        &= \abs*{\innerprod{\psi}{v_+}}^2 + 4\tau^2 - 4 \tau \abs*{\innerprod{\psi}{v_+}}^2.
    \end{align}
    \end{subequations}
    Consequently, combining \Cref{eq:warm-start-neg-eigenval,eq:warm-start-squared-norm-a,eq:warm-start-squared-norm-b} gives 
    \begin{equation}
        \label{eq:warm-start-bound}
        2\norm{\Delta}_\infty+1 = 2\tau + 1 \leq 2 \abs*{\innerprod{\psi}{v_+}}^2,
    \end{equation}
    which establishes the desired bound after rearranging the terms.
    
    To see the equality condition, note that if $\rho=\ketbra{\phi}{\phi}$ is pure, then the identity $\rho^2=\ketbra{\phi}{\phi}=\rho$, and it follows that both \Cref{eq:warm-start-squared-norm-a,eq:warm-start-bound} become identities, as desired.
\end{proof}

\subsection{Qubitization for unitary dilation}
\label{subsec:qubitization}

We introduce the following version of qubitization, as stated in \Cref{lemma:qubitization}, that converts eigenvalues of the Hermitian operator into eigenphases of an associated unitary dilation in a form suitable for phase estimation. Compared to the qubitization in \cite{LC19}, our version of qubitization requires \emph{no} ancillary qubits and produces a different rotation angle. 

\begin{lemma}[Qubitization for unitary dilation] \label{lemma:qubitization}
    Let $U$ be an $\rbra{n+a}$-qubit unitary dilation of an $n$-qubit Hermitian operator $H$. 
    Denote $\Pi = \ketbra{0}{0}^{\otimes a} \otimes I_n$, $R = 2\Pi - I_{n+a}$, and $W = URU^\dag R$, where $I_n$ denotes the $n$-qubit identity operator. 
    Let $\ket{v_j}$ be a unit eigenvector of $H$ with eigenvalue $\lambda_j \in \sbra{-1, 1}$, i.e., $H\ket{v_j} = \lambda_j\ket{v_j}$ for each $1 \leq j \leq 2^n$. 
    Denote $\ket{\psi_j} = \ket{0}^{\otimes a} \otimes \ket{v_j}$. 
    If $\abs{\lambda_j} < 1$, then define 
    \[
    \ket{\psi_j^\perp} \coloneqq \frac{U\ket{\psi_j} - \lambda\ket{\psi_j}}{\sqrt{1 - \lambda_j^2}}, \quad \ket{\phi_j^{\pm}} \coloneqq \frac{1}{\sqrt{2}} \rbra*{ \ket{\psi_j} \mp i \ket{\psi_j^\perp} },
    \]
    and the following holds:
    \begin{enumerate}
        \item $\Pi \ket{\psi_j} = \ket{\psi_j}$, $\Pi \ket{\psi_j^\perp} = 0$,  $U\ket{\psi_j} = \lambda_j\ket{\psi_j} + \sqrt{1 - \lambda_j^2} \ket{\psi_j^\perp}$,
        \item Let $\theta_j = \arccos\rbra{\lambda_j} \in \rbra{0, \pi}$, then 
        \begin{align*}
        W \ket{\psi_j} & = \cos\rbra{2\theta_j} \ket{\psi_j} + \sin\rbra{2\theta_j} \ket{\psi_j^\perp}, \\
        W \ket{\psi_j^\perp} & = -\sin\rbra{2\theta_j} \ket{\psi_j} + \cos\rbra{2\theta_j} \ket{\psi_j^\perp}, \\
        W \ket{\phi_j^{\pm}} & = e^{\pm i 2 \theta} \ket{\phi_j^{\pm}}.
        \end{align*}
    \end{enumerate}
    If $\abs{\lambda_j} = 1$, then $U \ket{\psi_j} = \lambda\ket{\psi_j}$ and $W\ket{\psi_j} = \ket{\psi_j}$. 

    Moreover, let $\mathcal{P} \coloneqq \operatorname{ran}\rbra{\Pi} = \spanspace\set{\ket{\psi_j}}{1 \leq j \leq 2^n}$, $\mathcal{Q} \coloneqq U \mathcal{P} = \operatorname{ran}\rbra{U \Pi U^\dag}$, and $\mathcal{M} = \rbra{\mathcal{P} + \mathcal{Q}}^{\perp}$, where $\operatorname{ran}\rbra{A} \coloneqq \set{A \ket{x}}{\ket{x} \in \operatorname{dom}\rbra{A}}$, then every nonzero vector $\ket{\varphi} \in \mathcal{M}$ is an eigenvector of $W$ with eigenvalue $1$, i.e., $W \ket{\varphi} = \ket{\varphi}$. 
\end{lemma}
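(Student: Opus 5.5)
The plan is to read $W$ geometrically as a product of two reflections and then verify the statement by direct computation on the two-dimensional subspaces $\spanspace\{\ket{\psi_j},U\ket{\psi_j}\}$. Since $R=2\Pi-I$ is the reflection about $\mathcal{P}$ and $URU^\dag=2U\Pi U^\dag-I$ is the reflection about $\mathcal{Q}$, we have $W=(2P_{\mathcal{Q}}-I)(2P_{\mathcal{P}}-I)$, where $P_{\mathcal{Q}}=U\Pi U^\dag$ and $P_{\mathcal{P}}=\Pi$. The whole lemma then reduces to computing how these two projectors act on the relevant vectors.

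\textbf{Item 1.} By the dilation property, $\Pi U\ket{\psi_j}=\ket{0}^{\otimes a}\otimes H\ket{v_j}=\lambda_j\ket{\psi_j}$. The component of $U\ket{\psi_j}$ orthogonal to $\mathcal{P}$ is therefore $U\ket{\psi_j}-\lambda_j\ket{\psi_j}$, which gives $\Pi\ket{\psi_j^\perp}=0$. Because $U$ is unitary, this component has norm $\sqrt{1-\lambda_j^2}$, so $\ket{\psi_j^\perp}$ is a unit vector and the stated expansion of $U\ket{\psi_j}$ follows. When $\abs{\lambda_j}=1$, the orthogonal component has norm $0$, so $U\ket{\psi_j}=\lambda_j\ket{\psi_j}$.

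\textbf{Key step: computing $P_{\mathcal{Q}}\ket{\psi_j}$.} Expand in the orthonormal basis $\{U\ket{\psi_k}\}$ of $\mathcal{Q}$:
\[
P_{\mathcal{Q}}\ket{\psi_j}=\sum_k \overline{\bra{\psi_j}U\ket{\psi_k}}\,U\ket{\psi_k}.
\]
Since the $\ket{v_k}$ form an eigenbasis of the Hermitian operator $H$, we have $\bra{\psi_j}U\ket{\psi_k}=\bra{v_j}H\ket{v_k}=\lambda_k\delta_{jk}$ with $\lambda_k$ real. Hence $P_{\mathcal{Q}}\ket{\psi_j}=\lambda_j U\ket{\psi_j}$. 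I expect this to be the main obstacle, because it is the only place where Hermiticity of $H$ and the eigenbasis choice are genuinely used; without them $\mathcal{P}$ and $\mathcal{Q}$ would not decompose into orthogonal two-dimensional blocks.

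\textbf{Item 2.} Using $R\ket{\psi_j}=\ket{\psi_j}$ and the key step,
\[
W\ket{\psi_j}=2\lambda_jU\ket{\psi_j}-\ket{\psi_j}=(2\lambda_j^2-1)\ket{\psi_j}+2\lambda_j\sqrt{1-\lambda_j^2}\,\ket{\psi_j^\perp},
\]
which is $\cos(2\theta_j)\ket{\psi_j}+\sin(2\theta_j)\ket{\psi_j^\perp}$. For $\ket{\psi_j^\perp}$, use $R\ket{\psi_j^\perp}=-\ket{\psi_j^\perp}$ and write $\ket{\psi_j^\perp}=(U\ket{\psi_j}-\lambda_j\ket{\psi_j})/\sqrt{1-\lambda_j^2}$. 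Applying $P_{\mathcal{Q}}U\ket{\psi_j}=U\ket{\psi_j}$ together with the key step, and re-expanding $U\ket{\psi_j}$ via Item 1, gives coefficients $-2\lambda_j\sqrt{1-\lambda_j^2}=-\sin(2\theta_j)$ on $\ket{\psi_j}$ and $2\lambda_j^2-1=\cos(2\theta_j)$ on $\ket{\psi_j^\perp}$. Thus $W$ acts on $\spanspace\{\ket{\psi_j},\ket{\psi_j^\perp}\}$ as a rotation by $2\theta_j$, and $\ket{\phi_j^\pm}$ are its eigenvectors with eigenvalues $e^{\pm 2i\theta_j}$, checked by a $2\times 2$ computation. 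In the case $\abs{\lambda_j}=1$, the key step gives $P_{\mathcal{Q}}\ket{\psi_j}=\lambda_j^2\ket{\psi_j}=\ket{\psi_j}$, so $W\ket{\psi_j}=\ket{\psi_j}$.

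\textbf{The subspace $\mathcal{M}$.} Take $\ket{\varphi}\in\mathcal{M}$. Then $\ket{\varphi}\perp\mathcal{P}$ gives $R\ket{\varphi}=-\ket{\varphi}$. Also $\ket{\varphi}\perp\mathcal{Q}$ means $U^\dag\ket{\varphi}\perp\mathcal{P}$, so $RU^\dag\ket{\varphi}=-U^\dag\ket{\varphi}$. Therefore
\[
W\ket{\varphi}=-URU^\dag\ket{\varphi}=UU^\dag\ket{\varphi}=\ket{\varphi}.
\]
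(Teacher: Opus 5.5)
Your proof is correct and follows the paper's argument essentially step for step. Both arguments obtain Item 1 from $\Pi U \Pi$, reduce to $W\ket{\psi_j}=2\lambda_j U\ket{\psi_j}-\ket{\psi_j}$, get $W\ket{\psi_j^\perp}$ by substituting its definition, handle $\abs{\lambda_j}=1$ via $U\ket{\psi_j}=\lambda_j\ket{\psi_j}$, and treat $\mathcal{M}$ by the same double-reflection sign argument. The only difference is how you get the key identity $U\Pi U^\dagger\ket{\psi_j}=\lambda_j U\ket{\psi_j}$: you expand in the basis $\{U\ket{\psi_k}\}$ of $\mathcal{Q}$, which tacitly requires the $\ket{v_k}$ to be orthonormal, whereas the paper works per eigenvector by noting $U^\dagger$ is also a unitary dilation of $H^\dagger=H$, i.e.\ $\Pi U^\dagger \Pi=(\Pi U \Pi)^\dagger$, so $\Pi U^\dagger\ket{\psi_j}=\lambda_j\ket{\psi_j}$ directly.
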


\begin{proof}
    We first consider the case where $\abs{\lambda_j} < 1$. 
    \begin{enumerate}
        \item $\Pi\ket{\psi_j} = \rbra{\ketbra{0}{0}^{\otimes a} \otimes I_n} \rbra{\ket{0}^{\otimes a} \otimes \ket{v_j}} = \ket{0}^{\otimes a} \otimes \ket{v_j}  = \ket{\psi_j}$. 
        Hence, $\Pi U \ket{\psi_j} = \Pi U \Pi \ket{\psi_j} = H \ket{0}^{\otimes a}\otimes \ket{v_j} =  \ket{0}^{\otimes a}\otimes \lambda_j \ket{v_j} = \lambda_j \ket{\psi_j}$, which gives
        \begin{align*}
        \Pi \ket{\psi_j^\perp} 
        & = \frac{1}{\sqrt{1-\lambda_j^2}} \Pi\rbra*{U\ket{\psi_j} - \lambda_j\ket{\psi_j}} \\
        & = \frac{1}{\sqrt{1-\lambda_j^2}} \rbra*{ \Pi U\ket{\psi_j} - \lambda_j \Pi\ket{\psi_j} } \\
        & = \frac{1}{\sqrt{1-\lambda_j^2}} \rbra*{ \lambda_j \ket{\psi_j} - \lambda_j \ket{\psi_j} } \\
        & = 0.
        \end{align*}
        Moreover, $\lambda_j\ket{\psi_j} + \sqrt{1 - \lambda_j^2} \ket{\psi_j^\perp} = \lambda_j \ket{\psi_j} + \rbra*{U \ket{\psi_j} - \lambda_j\ket{\psi_j}} = U \ket{\psi_j}$.

        \item Since $U^\dag$ is also a unitary dilation of $H$ (note that $H^\dag = H$), we have
        \[
        U^\dag \ket{\psi_j} = \lambda_j\ket{\psi_j} + \sqrt{1-\lambda_j^2} \ket{\psi^\perp_{j,\dag}}
        \]
        for some $\ket{\psi^\perp_{j,\dag}}$ satisfying $\Pi\ket{\psi^\perp_{j,\dag}} = 0$.
        Then, by direct calculation, we have
        \begin{align*}
            W \ket{\psi_j}
            & = URU^\dag R \ket{\psi_j} \\
            & = URU^\dag \ket{\psi_j} \\
            & = UR \rbra*{\lambda_j\ket{\psi_j} + \sqrt{1-\lambda_j^2} \ket{\psi^\perp_{j,\dag}}} \\
            & = U \rbra*{\lambda_j\ket{\psi_j} - \sqrt{1-\lambda_j^2} \ket{\psi^\perp_{j,\dag}}} \\
            & = \lambda_j U \ket{\psi_j} - \sqrt{1-\lambda_j^2} U \ket{\psi^\perp_{j,\dag}} \\
            & = \lambda_j U \ket{\psi_j} - U \rbra*{ U^\dag\ket{\psi_j} - \lambda_j\ket{\psi_j} } \\
            & = 2 \lambda_j U \ket{\psi_j} - \ket{\psi_j} \\
            & = 2\lambda_j\rbra*{\lambda_j\ket{\psi_j} + \sqrt{1 - \lambda_j^2} \ket{\psi_j^\perp}} - \ket{\psi_j} \\
            & = \rbra*{2\lambda_j^2 - 1} \ket{\psi_j} + 2 \lambda_j \sqrt{1 - \lambda_j^2} \ket{\psi_j^\perp} \\
            & = \rbra*{2\cos^2\rbra{\theta_j} - 1} \ket{\psi_j} + 2 \cos\rbra{\theta_j} \sqrt{1 - \cos^2\rbra{\theta_j}} \ket{\psi_j^\perp} \\
            & = \cos\rbra{2\theta_j} \ket{\psi_j} + \sin\rbra{2\theta_j} \ket{\psi_j^\perp}.
        \end{align*}
        Similarly, 
        \begin{align*}
            W \ket{\psi_j^\perp}
            & = URU^\dag R \ket{\psi_j^\perp} \\
            & = - URU^\dag \ket{\psi_j^\perp} \\
            & = - URU^\dag \cdot \frac{U\ket{\psi_j} - \lambda_j\ket{\psi_j}}{\sqrt{1 - \lambda_j^2}} \\
            & = - \frac{1}{\sqrt{1-\lambda_j^2}} U\ket{\psi_j} + \frac{\lambda_j}{\sqrt{1-\lambda_j^2}} URU^\dag \ket{\psi_j} \\
            & = - \frac{1}{\sqrt{1-\lambda_j^2}} \rbra*{\lambda_j\ket{\psi_j} + \sqrt{1 - \lambda_j^2} \ket{\psi_j^\perp}} + \frac{\lambda_j}{\sqrt{1-\lambda_j^2}} \rbra*{\rbra*{2\lambda_j^2 - 1} \ket{\psi_j} + 2 \lambda_j \sqrt{1 - \lambda_j^2} \ket{\psi_j^\perp}} \\
            & = - 2\lambda_j \sqrt{1 - \lambda_j^2} \ket{\psi} + \rbra*{2\lambda_j^2 - 1} \ket{\psi_j^\perp} \\
            & = - 2 \cos\rbra{\theta_j} \sqrt{1 - \cos^2\rbra{\theta_j}} \ket{\psi_j} + \rbra*{2\cos^2\rbra{\theta_j} - 1} \ket{\psi_j^\perp} \\
            & = - \sin\rbra{2\theta_j} \ket{\psi_j} + \cos\rbra{2\theta_j} \ket{\psi_j^\perp}. 
        \end{align*}
        With the above identities, it can also be verified that $W \ket{\phi_j^{\pm}} = e^{\pm i 2 \theta} \ket{\phi_j^{\pm}}$.
    \end{enumerate}

    Now we consider the case where $\abs{\lambda_j} = 1$.
    Previously, we have shown that $\Pi U \ket{\psi_j} = \lambda_j\ket{\psi_j}$. 
    Hence, $U\ket{\psi_j} = \lambda_j\ket{\psi_j} + \ket{\psi_j^\perp}$ for some $\ket{\psi_j^\perp}$ satisfying $\Pi \ket{\psi_j^\perp} = 0$. 
    On the other hand, $1 = \Abs{U\ket{\psi_j}}^2 = \Abs[\big]{\lambda_j\ket{\psi_j} + \ket{\psi_j^\perp}}^2 = \abs{\lambda_j}^2 + \Abs{\ket{\psi_j^\perp}}^2 = 1 + \Abs{\ket{\psi_j^\perp}}^2$, which gives $\Abs{\ket{\psi_j^\perp}} = 0$. 
    Therefore, $U\ket{\psi_j} = \lambda_j\ket{\psi_j}$, i.e., $\ket{\psi_j}$ is an eigenvector of $U$ with eigenvalue $\lambda_j$. 
    This implies that $\ket{\psi_j}$ is an eigenvector of $U^\dag$ with eigenvalue $\lambda_j^* = \lambda_j$ (note that $\lambda_j \in \mathbb{R}$), i.e., $U^\dag \ket{\psi_j} = \lambda_j\ket{\psi_j}$. 
    Therefore,
    \begin{align*}
        W \ket{\psi_j}
        & = U R U^\dag R \ket{\psi_j} \\
        & = U R U^\dag \ket{\psi_j} \\
        & = U R \rbra{\lambda_j \ket{\psi_j}} \\
        & = U \rbra*{\lambda_j \ket{\psi_j}} \\
        & = \lambda_j^2 \ket{\psi_j} \\
        & = \ket{\psi_j}. 
    \end{align*}

    For any nonzero vector $\ket{\varphi} \in \mathcal{M}$, it holds that $\ket{\varphi} \perp \mathcal{P}$ and $\ket{\varphi} \perp \mathcal{Q}$, which implies that $\Pi \ket{\varphi} = 0$ and $\Pi U^\dag \ket{\varphi} = 0$. 
    Therefore, 
    \begin{align*}
        W \ket{\varphi}
        & = U R U^\dag R \ket{\varphi} \\
        & = U R U^\dag \rbra{2\Pi - I} \ket{\varphi} \\
        & = U R U^\dag \rbra{-\ket{\varphi}} \\
        & = U \rbra{I - 2\Pi} U^\dag \ket{\varphi} \\
        & = \rbra{I - 2U\Pi U^\dag} \ket{\varphi} \\
        & = \ket{\varphi},
    \end{align*}
    which means that $\ket{\varphi}$ is an eigenvector of $W$ with eigenvalue $1$.
\end{proof}

\subsection{Quantum query algorithm when one state is pure}
\label{subsec:query-one-state-pure}

Beyond presenting the main result of this section, we need the following lemmas: 

\begin{lemma}[Quantum maximum phase estimation~\protect{\cite[Lemma 4.3]{MdW23}}]\label{lem:max_phase_estimation}
Let $\gamma \in \rbra{0, 1}$.
Suppose we have a $d$-dimensional unitary $U$ with unknown eigenvectors $\ket{\phi_1},\dots,\ket{\phi_d}$ and associated eigenphases $\theta_1,\dots, \theta_d \in [0, 2\pi - 2\epsilon)$. Suppose we also have a state preparation unitary $S$ such that
    \[
    S \ket{0} = \sum_{j=1}^d \alpha_j \ket{\phi_j}\ket{\varphi_j},
    \]
    where $\sum_{j:\theta_j=\theta_{\max}}\abs{\alpha_j}^2 \geq \gamma^2$ and $\ket{\varphi_j}$ are arbitrary (normalized) states. Then there exists a quantum algorithm $\mathsf{MaxPhaseEst}(U, S, \epsilon, \gamma)$ that uses at most $O(1/\gamma)$ queries to $S$, and $O(\log(1/\gamma)/\gamma\epsilon)$ queries to $U$, and with probability at least $0.99$ outputs $\tilde \theta$ such that $\abs{\tilde \theta - \theta_{\max}} \leq \epsilon$.
    Moreover, if $\sum_{j:\theta_j=\theta_{\max}}\abs{\alpha_j}^2 < \gamma^2$, then with probability at least $0.99$, $\abs{\tilde\theta - \theta_j} \leq \epsilon$ for some $j$.
\end{lemma}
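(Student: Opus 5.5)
This statement is quoted from \cite[Lemma 4.3]{MdW23}, so the plan is to reconstruct the standard argument: phase estimation combined with amplitude amplification, plus a search over thresholds.

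\textbf{Base subroutine.} First define a subroutine $\mathcal{B}(\tau)$ that tests whether there is substantial weight on eigenphases at least $\tau$. It prepares $S\ket{0}$ and runs Kitaev-style phase estimation on $U$ with precision $\epsilon/4$ and failure probability $\delta$; this costs $O(\log(1/\delta)/\epsilon)$ controlled uses of $U$, via the median trick or a high-confidence phase estimation variant. It then marks the branch if the estimated phase is at least $\tau-\epsilon/4$. The marked amplitude squared is therefore roughly $\sum_{j:\theta_j\ge \tau}|\alpha_j|^2$, up to leakage of at most $\delta$ from eigenphases lying outside the precision window. The hypothesis $\theta_j<2\pi-2\epsilon$ is used exactly here: it rules out wrap-around modulo $2\pi$, so phases near $2\pi$ cannot be mistaken for small ones.

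\textbf{Amplitude amplification and threshold search.} Next, wrap $\mathcal{B}(\tau)$ in fixed-point or exponential-search amplitude amplification \cite{BHMT02}, using $O(1/\gamma)$ iterations. This decides, with constant confidence, whether the marked weight is at least about $\gamma^2$ or essentially zero. Each iteration uses $S$ twice and $U$ a total of $O(\log(1/\delta)/\epsilon)$ times. Choosing $\delta=\Theta(\gamma^2)$ keeps the leakage below the amplification threshold, which gives the factor $\log(1/\gamma)$. Then binary-search over $\tau$ on the $O(1/\epsilon)$-grid of $[0,2\pi)$ to find the largest $\tau$ with nonnegligible marked weight, and output it as $\tilde\theta$.

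\textbf{Keeping the cost within the stated budget.} Binary search adds $O(\log(1/\epsilon))$ rounds with their error reduction, which would spoil the bound. To stay within $O(1/\gamma)$ uses of $S$ and $O(\log(1/\gamma)/(\gamma\epsilon))$ uses of $U$, I would instead follow \cite{MdW23}. Amplify once onto the branch of "large" estimated phases, then measure the phase-estimation register and output the maximal observed estimate among $O(1)$ repetitions. Concretely, amplitude-amplify the event "the estimate lies within $\epsilon$ of the top" by iteratively raising the threshold to the measured value. Because every measured estimate is $\epsilon$-close to some true $\theta_j$ except with probability $\delta$, the output is always $\epsilon$-close to some eigenphase. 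This gives the "moreover" clause unconditionally.

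\textbf{Correctness when the top weight is at least $\gamma^2$.} In this case the top eigenspace survives amplification with constant probability. A final estimate then lands within $\epsilon$ of $\theta_{\max}$, and by the previous paragraph nothing spurious lands above it.

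The main obstacle is the accounting in the previous two paragraphs: achieving $O(1/\gamma)$ uses of $S$ without an extra logarithmic factor from search or repetition. This requires the variable-time/exponential-search form of amplitude amplification, with success probability boosted to $0.99$ by constant repetition rather than by a $\log$-factor median. I would simply cite \cite[Lemma 4.3]{MdW23} for these details.
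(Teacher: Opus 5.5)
The paper gives no proof of this lemma and simply imports it from \cite[Lemma 4.3]{MdW23}. Your plan is consistent with that: phase estimation with failure probability $\delta=\Theta(\gamma^2)$, then threshold-raising (D\"{u}rr--H{\o}yer-style) maximum finding under amplitude amplification, with the exact query accounting deferred to the same reference. One correction to your argument for the ``moreover'' clause: after amplification with $k$ rounds, a spurious phase estimate can be observed with probability up to $(2k+1)^2\delta=O(\delta/\gamma^2)$, not $\delta$. This is exactly why $\delta$ must be a sufficiently small constant multiple of $\gamma^2$, not just $\Theta(\gamma^2)$.
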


\begin{lemma}\label{lem:T_inf_first_pure}
    For a pure state $\ketbra{\psi}{\psi}$ and a state $\rho$, it holds that
    \[
    \Tinfty\ab(\ketbra{\psi}{\psi}, \rho) = \lambda_{\max}\ab(\frac{\ketbra{\psi}{\psi} - \rho}{2}).
    \]
    More generally, we have
    \[
    \lambda_{\max}\ab(\frac{\ketbra{\psi}{\psi} - \rho}{2}) \geq \lambda_{\max}\ab(\frac{\rho- \ketbra{\psi}{\psi}}{2}).
    \]
\end{lemma}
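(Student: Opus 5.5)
The plan is to derive both claims directly from the single-pure-state warm-start lemma (\Cref{lemma:one-pure-state-warm-start}) together with \Cref{def:operator-norm-distance}. Set $\Delta \coloneqq (\ketbra{\psi}{\psi}-\rho)/2$. By \Cref{def:operator-norm-distance},
\[
\Tinfty(\ketbra{\psi}{\psi},\rho) = \norm{\Delta}_\infty = \max\cbra{\lambda_{\max}(\Delta),\lambda_{\max}(-\Delta)},
\]
where the factor $1/2$ has been absorbed into $\Delta$. The task is therefore to show that the maximum is attained by $\lambda_{\max}(\Delta)$. This establishes the first identity, and the second inequality follows immediately.

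I will split into two cases.

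\textbf{Case $\Delta = 0$.} Both $\lambda_{\max}(\Delta)$ and $\lambda_{\max}(-\Delta)$ equal $0$, so the identity and the inequality hold trivially.

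\textbf{Case $\Delta \neq 0$.} \Cref{lemma:one-pure-state-warm-start} says $\Delta$ has a unique positive eigenvalue $\tau$, and that $\tau = \norm{\Delta}_\infty$ (this is \Cref{eq:warm-start-neg-eigenval}). Since $\tau$ is the only positive eigenvalue, $\lambda_{\max}(\Delta) = \tau = \norm{\Delta}_\infty$, which is the first claim. For the inequality, $\lambda_{\max}(-\Delta)$ equals $-\lambda_{\min}(\Delta)$, and this is at most $\norm{\Delta}_\infty = \lambda_{\max}(\Delta)$ because $\norm{\Delta}_\infty$ bounds the absolute value of every eigenvalue of the Hermitian operator $\Delta$.

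For a self-contained check of $\tau = \norm{\Delta}_\infty$: $\tr(\Delta) = 0$ and all other eigenvalues are nonpositive, so the negative eigenvalues sum to $-\tau$. Each of them therefore lies in $[-\tau,0]$, giving $-\lambda_{\min}(\Delta) \le \tau$.

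There is no real obstacle here; the substantive content (uniqueness of the positive eigenvalue and tracelessness) is already in \Cref{lemma:one-pure-state-warm-start}. The only care needed is treating the degenerate case $\Delta = 0$ separately, and noting that the factor $1/2$ in the definition of $\Tinfty$ matches the $1/2$ built into $\Delta$.
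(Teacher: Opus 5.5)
Your proposal is correct and follows essentially the same route as the paper's proof. Both treat $\Delta=0$ as trivial. Otherwise both invoke \Cref{lemma:one-pure-state-warm-start} for the unique positive eigenvalue $\tau=\norm{\Delta}_\infty$, and use $\tr(\Delta)=0$ to place every negative eigenvalue in $[-\tau,0]$, which gives $\lambda_{\max}(\Delta)\ge\lambda_{\max}(-\Delta)$.
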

\begin{proof}
Let $\Delta \coloneq (\ketbra{\psi}{\psi}-\rho)/2$. If $\Delta = 0$, the lemma holds trivially. If $\Delta \neq 0$, by \cref{lemma:one-pure-state-warm-start}, $\Delta$ has a unique positive eigenvalue equal to $\norm{\Delta}_{\infty}$. The sum of all negative eigenvalues of $\Delta$ (with multiplicities) is equal to $-\norm{\Delta}_{\infty}$ due to $\tr\rbra{\Delta} = 0$.
Therefore, no negative eigenvalue is less than $-\norm{\Delta}_{\infty}$, which implies that $\lambda_{\max}\ab(\Delta) \geq \lambda_{\max}\ab(-\Delta)$.
\end{proof}

\subsubsection{The quantum query algorithm}
Now we establish the main result of this section: 

\begin{theorem}\label{thm:query-pure}
    Given quantum query access to the state-preparation circuits $Q_0$ and $Q_1$ of quantum states $\rho_0$ and $\rho_1$ with the promise that one of the two states is pure, there exists an explicit quantum algorithm that estimates $\td\rbra{\rho_0, \rho_1}$ and $\Tinfty\rbra{\rho_0, \rho_1}$ to within additive error $\epsilon$, using $O\ab(1/\epsilon)$ queries to $Q_0$ and $Q_1$.
\end{theorem}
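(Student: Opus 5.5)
We plan to combine the unitary dilation of \Cref{lemma:dilation-Delta}, the qubitization of \Cref{lemma:qubitization}, the warm start of \Cref{lemma:one-pure-state-warm-start}, and maximum phase estimation (\Cref{lem:max_phase_estimation}). Since we do not know which circuit prepares the pure state, we run the procedure twice: once with $\Delta_0 \coloneqq (\rho_0-\rho_1)/2$ and initial state prepared by $Q_0$, and once with $\Delta_1 \coloneqq (\rho_1-\rho_0)/2$ and initial state prepared by $Q_1$. We output the maximum of the two estimates. By \Cref{lemma:Tinf-vs-T}, doubling this estimate of $\Tinfty$ (and doubling $\epsilon$) handles $\td$. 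Rescaling $\epsilon$ by a constant costs nothing asymptotically.

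\textbf{Step 1 (eigenphase encoding).} Build $U_{\Delta_b}$ from \Cref{lemma:dilation-Delta} using $O(1)$ queries, and form $W_b = U_{\Delta_b} R U_{\Delta_b}^\dagger R$. By \Cref{lemma:qubitization}, each eigenvector $\ket{0}\ket{v_j}$ of $\Delta_b$ with $\lambda_j=\cos\theta_j$ decomposes as $(\ket{\phi_j^+}+\ket{\phi_j^-})/\sqrt2$, with eigenphases $\pm 2\theta_j$. The remaining eigenvectors have phase $0$.

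Maximum phase estimation finds the largest phase in $[0,2\pi)$, so we use the reflected unitary $-W_b^{\dagger}$ (or an equivalent shift). The phases $\pm2\theta_j$ become $\pi \mp 2\theta_j$, the spurious phases sit at $\pi$, and the phases $2\theta_j$ become $\pi-2\theta_j\in(-\pi,\pi)$. The cleanest choice is to take phases $\pi+2\theta_j-2\pi = 2\theta_j-\pi$ mod $2\pi$. This needs care, and I would fix the convention by verifying that the \emph{smallest} $\theta_j$, i.e.\ the largest $\lambda_j$, maps to the maximal phase. Concretely, use $e^{i\pi}W_b^\dagger$: the phases are $\pi-2\theta_j$ and $\pi+2\theta_j$, taken mod $2\pi$ into $[0,2\pi)$. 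The spurious phase is $\pi$. The value $\pi+2\theta_j$ wraps around to $2\theta_j-\pi$ when $\theta_j>\pi/2$, i.e.\ for negative eigenvalues. For positive eigenvalues $\theta_j<\pi/2$, and the largest phase is $\pi+2\theta_{\min}$, which corresponds to the \emph{smallest} eigenvalue. This is wrong, so instead I would use the branch $\pi-2\theta_j$ via $-W_b$, whose phases are $\pi\pm2\theta_j$.

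The real fix is to work with $\Delta'=-\Delta_b$ so that monotonicity goes the right way, or to apply MaxPhaseEst to a suitable power. The key check is that the top eigenvalue $\lambda_{\max}(\Delta_b)>0$ gives the maximum phase among all branches in the chosen window, with the spurious $\pi$ phases and wrapped branches falling below it. I expect pinning down this phase convention, and excluding wrap-around near $0/2\pi$ (which requires phases in $[0,2\pi-2\epsilon)$), to be the main technical obstacle. The tools are a constant shift and the fact that all $\lambda_j\in[-1/2,1/2]$, since $\norm{\Delta_b}_\infty\le 1/2$. This gives $\theta_j\in[\pi/3,2\pi/3]$, so all relevant phases stay in a window bounded away from wrapping. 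Since $d(2\theta)/d\lambda = -2/\sqrt{1-\lambda^2}$ is bounded in magnitude on this range, an $O(\epsilon)$ phase error gives an $O(\epsilon)$ eigenvalue error.

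\textbf{Step 2 (overlap).} The state preparation $S_b$ is simply $Q_b$, tensored with $\ket{0}$ on the dilation ancilla. If $\rho_b=\ketbra{\psi}{\psi}$ is the pure one, expand $\ket{0}\ket{\psi}$ in the eigenbasis of $\Delta_b$. \Cref{lemma:one-pure-state-warm-start} gives weight at least $1/2$ on $\ket{v_+}$, hence at least $1/4$ on the relevant eigenvector $\ket{\phi^{\pm}_+}$ of the unitary. So $\gamma=1/2$ is a constant.

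\textbf{Step 3 (combining).} \Cref{lem:max_phase_estimation} then uses $O(1)$ queries to $S_b$ and $O(1/\epsilon)$ queries to $W_b$, i.e.\ $O(1/\epsilon)$ queries to $Q_0$ and $Q_1$. For the correct index $b$, it outputs $\lambda_{\max}(\Delta_b)=\Tinfty$ up to error $\epsilon$, by \Cref{lem:T_inf_first_pure}. For the wrong index, the guarantee in the ``moreover'' clause of \Cref{lem:max_phase_estimation} says the output is $\epsilon$-close to some actual phase. Its decoded value is therefore an eigenvalue of $\Delta_{1-b}$, or a spurious value that we map to something at most $\norm{\Delta}_\infty$. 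Either way it is at most $\Tinfty+O(\epsilon)$. Taking the maximum of the two runs is therefore correct, with success probability at least $0.98$.
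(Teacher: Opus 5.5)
You have the same architecture as the paper: dilation, qubitization, the warm start giving $\gamma=1/2$, two runs and a maximum, and the ``moreover'' clause of \Cref{lem:max_phase_estimation} for the run without the pure state. The gap is in Step 1. You flag it yourself as the main obstacle, never resolve it, and the encodings you do commit to fail.

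With the unshifted $\Delta_b$, you have $\theta_j=\arccos\lambda_j\in[\pi/3,2\pi/3]$, so $2\theta_j\in[2\pi/3,4\pi/3]$. The unitary $-W_b$ (same spectrum as $-W_b^\dagger$) then has eigenphases $\pi\pm2\theta_j$. Modulo $2\pi$ these lie in $[0,\pi/3]\cup[5\pi/3,2\pi)$, so they cluster exactly at the wrap point.
\begin{itemize}
\item An eigenvalue $\lambda_j>0$ near $0$ gives the phase $\pi+2\theta_j=2\pi-2\arcsin\lambda_j$. This violates the hypothesis that all phases lie in $[0,2\pi-2\epsilon)$.
\item The largest phase of $-W_b$ is set by the \emph{smallest} nonzero $\abs{\lambda_j}$, not by $\lambda_{\max}(\Delta_b)$.
\item Replacing $\Delta_b$ by $-\Delta_b$ does not help. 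Since $\arccos(-\lambda)=\pi-\arccos\lambda$, the phase set is unchanged.
\end{itemize}
So your claim that ``all relevant phases stay in a window bounded away from wrapping'' is false for the unitary you chose. Steps 2--3 (which top eigenspace the warm start overlaps, and why the wrong run cannot overshoot) therefore have no fixed spectral picture to rest on.

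The missing idea in the paper is an affine shift \emph{before} qubitization. Using LCU with the identity, build a dilation of $A_b=(I-\Delta_b)/2$, whose eigenvalues $\mu_j=(1-\lambda_j)/2$ lie in $[1/4,3/4]$.
\begin{itemize}
\item Then $\vartheta_j=2\arccos\mu_j\in[2\arccos(3/4),2\arccos(1/4)]\subset(0,\pi-2\epsilon)$, and it is strictly increasing in $\lambda_j$.
\item Hence $-W_{A_b}$ has phases $\pi-\vartheta_j\in(0,\pi)$ and $\pi+\vartheta_j\in(\pi,2\pi-2\epsilon)$, plus the spurious phase $\pi$.
\item The maximum phase is $\pi+2\arccos((1-\lambda_{\max}(\Delta_b))/2)$. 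It is decoded by the $1$-Lipschitz map $\theta\mapsto 1-2\cos((\pi-\theta)/2)$.
\item \Cref{lem:T_inf_first_pure} places every phase of the other run below this maximum.
\end{itemize}

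Alternatively, your unshifted construction works if you simply drop the factor $-1$. $W_b$ itself has phases $2\theta_j$ and $2\pi-2\theta_j$, both in $[2\pi/3,4\pi/3]$, plus the spurious phase $0$. Its maximal phase is $\pi+2\arcsin\norm{\Delta}_\infty$, attained on the $\ket{\phi^-}$ branch of $\ket{v_+}$, and $\theta\mapsto\sin((\theta-\pi)/2)$ decodes it.

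Step 1 needs one of these concrete encodings together with its verification. With that in place, your Steps 2 and 3 go through essentially as in the paper.
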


\begin{proof}
Given $(n+a)$-qubit unitary oracles $Q_0$ and $Q_1$ that prepare purifications $\ket{\psi_0}$ and $\ket{\psi_1}$ of the $n$-qubit quantum states $\rho_0$ and $\rho_1$, respectively. 
We can estimate $\td\rbra{\rho_0, \rho_1}$ and $\Tinfty\rbra{\rho_0, \rho_1}$ as follows.

\begin{enumerate}
    \item For each $k \in \cbra{0, 1}$, 
\begin{enumerate}[label=1.\arabic*.]
    \item Construct a unitary operator $U_{A_k}$ that is an $(n+2a+2)$-qubit unitary dilation of
    \[
    A_k \coloneq \frac{I - (-1)^{k}\Delta}{2}, \qquad \text{where } \Delta \coloneq \frac{\rho_0-\rho_1}{2},
    \]
    using $O(1)$ queries to $Q_0$ and $Q_1$ by \cref{lemma:dilation-Delta} and linear combination of unitaries~\cite{CW12}.
    \item Construct a qubitization unitary operator $W_{A_k} \coloneq U_{A_k}R(U_{A_k})^\dag R$, where $R = 2\Pi - I_{n+2a+2}$ and $\Pi = \ketbra{0}{0}^{\otimes (2a+2)} \otimes I_n$, using $O(1)$ queries to $U_{A_k}$. 
    \item Apply the maximum phase estimation algorithm in \cref{lem:max_phase_estimation}, let $\tilde{\theta}_{\max}^{(k)}$ be the output of $\mathsf{MaxPhaseEst}\ab*(-W_{A_k},\, I_{2a+2} \otimes Q_k,\, \epsilon,\, \gamma)$, where $\gamma = 1/2$.
\end{enumerate}
    \item Let $\tilde\theta_{\max} = \max\ab*\{\tilde{\theta}_{\max}^{(0)}, \tilde{\theta}_{\max}^{(1)}\}$. 
    \item Return $\tilde{\lambda}_{\max} = 1 - 2\cos\ab*((\pi - \tilde{\theta}_{\max})/{2})$.
\end{enumerate}
The above algorithm is formally given in \cref{alg:T_inf_estimation_one_pure}.
\begin{algorithm}[ht]
\caption{Operator norm distance estimation when one state is pure.}
\label{alg:T_inf_estimation_one_pure}
\begin{algorithmic}[1]
\Require $(n+a)$-qubit unitary oracles $Q_0$ and $Q_1$ that prepare purifications of $n$-qubit quantum states $\rho_0$ and $\rho_1$, precision $\epsilon \in (0, 0.1)$.
\Ensure An estimate of $\Tinfty(\rho_0, \rho_1)$.
 
\State Denote $\Delta \coloneq (\rho_0-\rho_1)/2$ and $R \coloneq 2\Pi - I_{n+2a+2}$, where $\Pi \coloneq \ketbra{0}{0}^{\otimes(2a+2)} \otimes I_n$.

\For{$k \in \{0,1\}$}
    \State Construct an $(n+2a+2)$-qubit unitary dilation $U_{A_k}$ of
    $A_k \coloneq \rbra{I-(-1)^k\Delta}/{2}$.
    \State Construct a qubitization unitary $W_{A_k} \coloneq U_{A_k} R \rbra{U_{A_k}}^\dagger R$.
    \State $\tilde{\theta}^{(k)}_{\max} \gets \mathsf{MaxPhaseEst}\ab*(-W_{A_k},\, I_{2a+2} \otimes Q_k,\, \epsilon/2,\, \gamma)$, where $\gamma = 1/2$.
\EndFor
\State $\tilde\theta_{\max} \gets \max\ab*\{\tilde{\theta}_{\max}^{(0)}, \tilde{\theta}_{\max}^{(1)}\}$.
\State \Return $\tilde{\lambda}_{\max} \gets 1 - 2\cos\ab*((\pi - \tilde{\theta}_{\max})/{2})$.
\end{algorithmic}
\end{algorithm}

First, we analyze the complexity of \cref{alg:T_inf_estimation_one_pure}. 
Note that for each $k \in \cbra{0, 1}$, the unitary dilation $U_{A_k}$ can be implemented using $O(1)$ queries to $Q_0$ and $Q_1$ by \cref{lemma:dilation-Delta} and linear combination of unitaries~\cite{CW12}. 
The qubitization unitary $-W_{A_k}$ uses $O(1)$ queries to $U_{A_k}$. 
The subroutine $\mathsf{MaxPhaseEst}(-W_{A_k}, I_{2a+2}\otimes Q_k, \epsilon, \gamma)$ uses a total number of $O(1/\epsilon)$ queries to $Q_k$ and $U_{A_k}$. Therefore, the total number of queries to $Q_0$ and $Q_1$ is $O(1/\epsilon)$. 

Next, we prove the correctness of the algorithm. Without loss of generality, we may assume that $\rho_0 = \ketbra{\psi}{\psi}$ is pure (if $\rho_1$ is pure, the analysis is similar). 
We now consider the properties of $\tilde\theta_{\max}^{\rbra{k}}$ for each $k \in \cbra{0, 1}$.
For $k = 0$, note that the random variable $\tilde{\theta}^{(0)}_{\max}$ is the output of $ \mathsf{MaxPhaseEst}\ab*(-W_{A_0},\, I_{2a+2} \otimes Q_0,\, \epsilon,\, \gamma)$, where $\gamma = 1/2$. Then, it holds (shown later) that
\begin{equation}\label{eq:subroutine_ouput_0}
    \Pr\sbra[\Big]{\abs[\big]{\tilde{\theta}^{(0)}_{\max} - \theta_{\max} }\leq \epsilon} \geq 0.99, \qquad \text{where } \theta_{\max} = \pi + 2\arccos\ab(\frac{1-\lambda_{\max}(\Delta)}{2}).
\end{equation}
For $k = 1$, the random variable $\tilde{\theta}^{(1)}_{\max}$ is the output of $ \mathsf{MaxPhaseEst}\ab*(-W_{A_1},\, I_{2a+2} \otimes Q_1,\, \epsilon,\, \gamma)$, where $\gamma = 1/2$. Then, with probability at least $0.99$, there exists an eigenphase $\theta'$ of $-W_{A_1}$ such that $\abs{\tilde{\theta}^{(1)}_{\max} - \theta' }\leq \epsilon$.
Moreover, we have (shown later)
\begin{equation} \label{eq:theta}
    \theta' \leq \theta_{\max}.
\end{equation}

Therefore, when we take $\tilde{\theta}_{\max} = \max\ab*\{\tilde{\theta}_{\max}^{(0)}, \tilde{\theta}_{\max}^{(1)}\}$, it holds that
\[
\Pr\sbra*{\abs[\big]{\tilde{\theta}_{\max} - \theta_{\max} }\leq \epsilon} \geq 0.99^2 > 0.98.
\]
With this, we can bound the error of the estimate $\tilde\lambda_{\max}$ as:
\begin{equation} \label{eq:error}
\Pr\sbra*{\abs*{\tilde\lambda_{\max} - \Tinfty\rbra{\rho_0, \rho_1}} \leq \epsilon} > 0.98.
\end{equation}
To see this, note that $\Tinfty(\rho_0, \rho_1) = \lambda_{\max}(\Delta)$ when $\rho_0$ is pure by \cref{lem:T_inf_first_pure}.
Therefore, by the mean value theorem,
\begin{align*}
    \abs*{\tilde{\lambda}_{\max} - \Tinfty(\rho_0, \rho_1)} = \abs*{\tilde{\lambda}_{\max} - \lambda_{\max}(\Delta)} = \abs*{\sin\ab(\frac{\xi-\pi}{2})} \abs*{\tilde{\theta}_{\max} - \theta_{\max}}.
\end{align*}
for some $\xi$ between $\theta_{\max}$ and $\tilde{\theta}_{\max}$, which gives \cref{eq:error} by noting that $\abs{\sin\ab((\xi-\pi)/2)} \leq 1$.
Consequently, \cref{alg:T_inf_estimation_one_pure} returns an estimate of $\Tinfty\rbra{\rho_0,\rho_1}$ to within additive error $\epsilon$ with probability at least $0.98$.
We have proved that \cref{alg:T_inf_estimation_one_pure} estimates $\Tinfty(\rho_0, \rho_1)$ to within additive error $\epsilon$. 
In addition, since $\td(\rho_0, \rho_1) = 2\Tinfty(\rho_0, \rho_1)$, the algorithm can estimate $\td(\rho_0, \rho_1)$ to within additive error $\epsilon$ by estimating $\Tinfty(\rho_0, \rho_1)$ to within additive error $\epsilon/2$.

To complete the proof, it remains to prove \cref{eq:subroutine_ouput_0,eq:theta}.
We begin with the analysis on the subroutine $\mathsf{MaxPhaseEst}(-W_{A_0}, I_{2a+2} \otimes Q_0, \epsilon, \gamma)$. To show that applying the quantum maximum phase estimation algorithm in \cref{lem:max_phase_estimation} gives a result satisfying \cref{eq:subroutine_ouput_0}, we prove the following properties in order.
\begin{enumerate}
    \item The eigenphases of $-W_{A_0}$ are in $[0, 2\pi - 2\epsilon)$, and the largest eigenphase of $-W_{A_0}$ is $\theta_{\max} = \pi + 2\arccos\ab((1-\lambda_{\max}(\Delta))/{2})$.
    \item The state preparation unitary $I_{2a+2} \otimes Q_0$ prepares a state that has an overlap at least $\gamma = 1/2$ with the top eigenspace of $-W_{A_0}$.
\end{enumerate}

Let $d=2^n$ be the dimension of the quantum states $\rho_0$ and $\rho_1$. For $j\in \{1,\dots, d\}$, let $\ket{v_j}$ denote an eigenvector of $\Delta$ with eigenvalue $\lambda_j$, i.e., $\Delta\ket{v_j} = \lambda_j \ket{v_j}$, then $\ket{v_j}$ is also an eigenvector of $A_0$ such that
\[
A_0 \ket{v_j} = \mu_j \ket{v_j}, \qquad \mu_j \coloneq \frac{1-\lambda_j}{2}.
\]
Note that the eigenvalues of $\Delta$ are in $[-1/2, 1/2]$, and therefore the eigenvalues of $A_0$ are in $[1/4, 3/4]$.

Let $\ket{u_j} = \ket{0}^{\otimes \rbra{2a+2}}\otimes\ket{v_j}$, and define 
\[
    \ket{u_j^\perp} \coloneq \frac{U_{A_0}\ket{u_j} - \mu_j\ket{u_j}}{\sqrt{1 - \mu_j^2}}.
\]
By \cref{lemma:qubitization}, since $\abs{\mu_j}<1$, the qubitization unitary $W_{A_0}$ has eigenvectors 
\begin{align*}
        \ket{\phi^{\pm}_{j}} & = \frac{1}{\sqrt{2}} \rbra*{ \ket{u_j} \mp i \ket{u_j^\perp}}, 
\end{align*}
with $W_{A_0} \ket{\phi^{\pm}_{j}} = e^{\pm i \vartheta_j} \ket{\phi^{\pm}_{j}}$, where
\[
\vartheta_j \coloneq 2\arccos(\mu_j) = 2\arccos\ab(\frac{1-\lambda_j}{2}).
\]
Moreover, let $\mathcal{P} \coloneqq \spanspace\set{\ket{u_j}}{1 \leq j \leq 2^n}$, $\mathcal{Q} \coloneq \spanspace\set{\ket{u_j^\perp}}{1 \leq j \leq 2^n}$, and $\mathcal{M} = \rbra{\mathcal{P} + \mathcal{Q}}^{\perp}$, then every nonzero vector $\ket{\phi'} \in \mathcal{M}$ is an eigenvector of $W_{A_0}$ with eigenphase $0$, i.e., $W_{A_0} \ket{\phi'} = \ket{\phi'}$. 

Since the function $\arccos((1-x)/{2})$ is strictly monotonically increasing on $[-1/2, 1/2]$, for increasing eigenvalues of $\Delta$, 
\[
\lambda_1 \leq \lambda_2 \leq \cdots \leq \lambda_{d},
\]
we have
\[
\vartheta_1 \leq \vartheta_2 \leq \cdots \leq \vartheta_{d}, \quad\text{where }
\vartheta_{j} \in \ab[2\arccos\ab(3/4), 2\arccos\ab(1/4)] \subset (0, \pi-2\epsilon).
\]
Now we add a global phase $e^{i\pi}$ to $W_{A_0}$, i.e., $e^{i\pi}W_{A_0} = - W_{A_0}$. The phases of $- W_{A_0}$, denoted by $\theta^{\pm}_{j} = \pi \pm \vartheta_j$, lie in $[0, 2\pi-2\epsilon)$ as $\pi - \vartheta_j \in (0,\pi)$ and $\pi + \vartheta_j \in (\pi,2\pi-2\epsilon)$. The largest phase of $- W_{A_0}$, denoted by $\theta_{\max} \coloneq \max_j\{\theta^{\pm}_{j}\} = \max_j\{\theta^{+}_{j}\}$, clearly corresponds to the largest eigenvalue of $\Delta$, 
\begin{equation}\label{eq:max_eigenphase_onepure}
    \theta_{\max}= \pi+\vartheta_{N} = \pi + 2\arccos\ab(\frac{1-\lambda_{\max}(\Delta)}{2}).
\end{equation}

Then we show that the state preparation unitary $I_{2a+2} \otimes Q_0$ prepares a state that has an overlap at least $\gamma = 1/2$ with the top eigenspace of $-W_{A_0}$. The unitary $I_{2a+2} \otimes Q_0$ prepares the input state
\[
\ket{\Psi_0} \coloneq (I_{2a+2} \otimes Q_0) \ket{0}^{\otimes \rbra{n+3a+2}} = \ket{0}^{\otimes \rbra{2a+2}} \ket{\psi_0} = \ket{0}^{\otimes \rbra{2a+2}} \ket{\psi}_{\sfA}\ket{\varphi}_{\sfB},
\]
where $\ket{\psi_0} = \ket{\psi}_{\sfA}\ket{\varphi}_{\sfB}$ is the purification of the pure state $\rho = \ketbra{\psi}{\psi}$.
\cref{lemma:one-pure-state-warm-start} shows that either $\Delta = 0$, or $\Delta$ has the unique positive eigenvalue equal to $\norm{\Delta}_{\infty}$. Note that if $\Delta=0$, then $\lambda_j = 0$ and $\theta_j = 5\pi/3$ for all $j$. In this case, $\ket{\psi}$ is an eigenstate of $A = I/2$ with the maximum eigenvalue $1/2$, thus it trivially holds that the input state $\ket{\Psi_0}$ has an overlap at least $\gamma=1/2$ with the top eigenspace of $-W_{A_0}$.

If $\Delta \neq 0$, let $\ket{v_+}$ be the unique unit eigenvector corresponding to this eigenvalue, then 
\[
\abs*{\braket{v_+}{\psi}}^2 \geq \frac{1}{2}. \] 
The overlap between the input state and the top eigenspace of $-W_{A_0}$ is at least
\begin{align*}
    \sum_{j:\theta^{+}_{j} = \theta_{\max}} \norm[\big]{(\bra{\phi_{j}^{+}}\otimes I_a) \ket{\Psi_0}}_2^2 &= \sum_{j:\theta^{+}_{j} = \theta_{\max}} \norm[\Big]{\ab\Big(\frac{1}{\sqrt{2}} \ab\big( \bra{u_j} + i \bra{u_j^\perp})\otimes I_a) \ket{\Psi_0}}_2^2 \\
    &\geq \sum_{j:\theta^{+}_{j} = \theta_{\max}} \norm[\Big]{\frac{1}{\sqrt{2}}\ab\big(\bra{u_j} \otimes I_a) \ket{\Psi_0}}_2^2\\
    &= \sum_{j:\lambda_{j}=\lambda_{\max}(\Delta)} \norm[\Big]{\frac{1}{\sqrt{2}}\big( \bra{0}^{\otimes {2a+2}} \otimes \bra{v_j}  \otimes I_a)\ab\big(\ket{0}^{\otimes \rbra{2a+2}} \ket{\psi}_{\sfA}\ket{\varphi}_{\sfB})}_2^2\\
    &= \sum_{j:\lambda_{j}=\lambda_{\max}(\Delta)} \norm[\Big]{\frac{1}{\sqrt{2}}\big( \bra{v_j}  \otimes I_a)\ab\big(\ket{\psi}_{\sfA}\ket{\varphi}_{\sfB})}_2^2\\
    &\geq \frac{1}{2}\abs*{\braket{v_+}{\psi}}^2\\
    &\ge \frac{1}{4}.
\end{align*}
Thus we can take $\gamma = \sqrt{1/4} =1/2$. Therefore, by \cref{lem:max_phase_estimation}, the maximum phase estimation algorithm $\mathsf{MaxPhaseEst}(-W_{A_0}, I_{2a+2} \otimes Q_0, \epsilon, \gamma)$ uses $O(1)$ queries to $Q_0$ and $O(1/\epsilon)$ queries to $-W_{A_0}$, outputs an estimate $\tilde{\theta}_{\max}^{(0)}$ satisfying \cref{eq:subroutine_ouput_0}.

Now we prove \cref{eq:theta}.
Considering the second subroutine $k=1$, since $\rho_1$ could be an arbitrary mixed state, the input state prepared by the unitary $I_{2a+2} \otimes Q_1$ is not guaranteed to have overlap $\gamma = \frac{1}{2}$ with the top eigenspace of $-W_{A_1}$. However, even when the overlap is not promised, with probability at least $0.99$, the output of the algorithm $\mathsf{MaxPhaseEst}(-W_{A_1}, I_{2a+2} \otimes Q_1, \epsilon, 1/2)$ is within $\epsilon$ of an eigenphase of $-W_{A_1}$, whose corresponding eigenvector appears in the spectral decomposition of $I_{2a+2} \otimes Q_1 \ket{0}^{\otimes(n+3a+2)}$. By \cref{lem:T_inf_first_pure}, we have $\lambda_{\max}(\Delta) \geq \lambda_{\max}(-\Delta)$, which implies that all phases of $-W_{A_1}$ are less than or equal to the largest phase of $-W_{A_0}$, that is, for any eigenphase $\theta'$ of $-W_{A_1}$, we have $\theta' \leq \theta_{\max}$.
\end{proof}

\subsection{Quantum sample algorithm  when one state is pure}
\label{subsec:sample-one-state-pure}

\begin{theorem} \label{thm:sample-pure}
    When one of the quantum states $\rho_0$ and $\rho_1$ is pure, there is a quantum algorithm that estimates $\td\rbra{\rho_0, \rho_1}$ and $\Tinfty\rbra{\rho_0, \rho_1}$ to within additive error $\epsilon$, using $O\rbra{1/\epsilon^2}$ samples of $\rho_0$ and $\rho_1$. 
\end{theorem}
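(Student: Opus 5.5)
The plan is to combine the query algorithm of \Cref{thm:query-pure} with the quantum sample-to-query lifting of \Cref{lemma:lifting}. By \Cref{thm:query-pure}, there is a quantum algorithm $\mathcal{A}$ with the following guarantee. It uses $q = O(1/\epsilon)$ queries to the state-preparation circuits $Q_0,Q_1$ (and their controlled versions and inverses). With probability at least $0.98$, it outputs an estimate of $\Tinfty(\rho_0,\rho_1)$ to within additive error $\epsilon$. Running it with precision $\epsilon/2$ gives an estimate of $\td(\rho_0,\rho_1)=2\Tinfty(\rho_0,\rho_1)$ to within $\epsilon$, using \Cref{lemma:Tinf-vs-T}. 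This changes $q$ only by a constant factor.

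Next I would apply \Cref{lemma:lifting} to $\mathcal{A}$. This gives an algorithm $\mathcal{A}'$ that uses $O(q^2)=O(1/\epsilon^2)$ samples of $\rho_0$ and $\rho_1$. Its output distribution is $0.99$-close (in total variation) to that of $\mathcal{A}$. So the probability that $\mathcal{A}'$ outputs a value within $\epsilon$ of the target is at least $0.98-0.01=0.97 > 2/3$. If a higher confidence is wanted, standard median amplification over $O(\log(1/\delta))$ independent runs boosts it at a multiplicative cost.

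The step needing care is checking that the lifting theorem applies in our setting. \Cref{lemma:lifting} is stated for algorithms using queries to state-preparation circuits of unknown states. $\mathcal{A}$ is of exactly this form: it accesses $Q_b$ only through controlled-$Q_b$ and controlled-$Q_b^\dagger$, inside the dilation of \Cref{lemma:dilation-Delta}, the qubitization, and $\mathsf{MaxPhaseEst}$. I would also check that its correctness depends only on the reduced states $\rho_0,\rho_1$, which holds because the guarantee of \Cref{thm:query-pure} holds for every purification. A second point is the promise that one state is pure, with no knowledge of which one. This causes no difficulty, because $\mathcal{A}$ already handles both cases symmetrically by taking the maximum over $k\in\{0,1\}$. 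Hence the lifted algorithm inherits correctness for every pair of states satisfying the promise, giving the claimed $O(1/\epsilon^2)$ sample bound.
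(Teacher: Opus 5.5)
Your proposal is correct and follows the paper's proof exactly: the paper obtains \Cref{thm:sample-pure} in one line by applying the sample-to-query lifting of \Cref{lemma:lifting} to the $O(1/\epsilon)$-query algorithm of \Cref{thm:query-pure}, which gives $O(1/\epsilon^2)$ samples. Your additional checks are sound elaborations of details the paper leaves implicit: the $0.98-0.01$ success-probability accounting, independence from the choice of purification, the use of controlled and inverse queries, and not knowing which state is pure.
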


\begin{proof}
    This is obtained immediately by \cref{lemma:lifting} using the query complexity $O\rbra{1/\epsilon}$ given in \cref{thm:query-pure}, which gives the sample complexity $O\rbra{1/\epsilon^2}$. 
\end{proof}

\section{Efficient quantum algorithms for estimating operator norm distance}

We now turn to the general setting, in which neither input state is assumed to be pure. The same overall strategy still applies, but the constant-overlap warm start from \Cref{sec:one-state-pure} is no longer available and therefore must be replaced by a weaker overlap guarantee. In particular, by combining the overlap lemma in \Cref{subsec:general-overlap-lemma} with the qubitization framework from \Cref{subsec:qubitization}, we obtain general quantum query and sample algorithms for estimating the operator norm distance, which are presented respectively in \Cref{subsec:query-general,subsec:general-sample}.

\subsection{The input states provide an eigenvalue-scaled overlap}
\label{subsec:general-overlap-lemma}

To handle arbitrary mixed states, we next prove a substitute for the warm-start lemma by working with purifications of the input states. The resulting overlap bound is no longer constant, but it scales with the top eigenvalue and is sufficient for the general algorithm.

\begin{lemma}[Top eigenspace overlap lemma]\label{lem:top_eigenspace_overlap}
Let $\ket{\psi_0}$ and $\ket{\psi_1}$ be $(n+a)$-qubit purifications of $n$-qubit quantum states $\rho_0$ and $\rho_1$, respectively. Define $\Delta \coloneq (\rho_0 - \rho_1)/2$ and let $\Pi_{\max}^{(\Delta)}$ be the projector onto the top eigenspace of $\Delta$, which corresponds to its maximum eigenvalue $\lambda_{\max}(\Delta)$. Then, it holds that:
\[ \norm[\big]{\ab\big(  \Pi_{\max}^{(\Delta)} \otimes I_{a})\ket{\psi_0}}_2^2 \ge 2\lambda_{\max}(\Delta).\]
Moreover, if $\lambda_{\max}(\Delta)>0$, then equality holds if and only if 
\[ \rank\rbra[\big]{\Pi_{\max}^{(\Delta)}} = 1 \quad\text{and}\quad \Pi_{\max}^{(\Delta)} \rho_1 \Pi_{\max}^{(\Delta)} = 0.\]
In particular, if $\rho_0 = \ketbra{\phi}{\phi}$ is pure, then equality holds if and only if $\rho_1$ and $\ket{\phi}$ are orthogonal.
\end{lemma}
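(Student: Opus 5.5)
Write $P \coloneqq \Pi_{\max}^{(\Delta)}$ and $\lambda \coloneqq \lambda_{\max}(\Delta)$. The plan is to rewrite the overlap as a trace against $\rho_0$, split $\rho_0 = 2\Delta + \rho_1$, and use positivity of $\rho_1$ together with the spectral identity $P\Delta P = \lambda P$.

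First, since $\ket{\psi_0}$ purifies $\rho_0$, we have $\norm{(P\otimes I_a)\ket{\psi_0}}_2^2 = \bra{\psi_0}(P\otimes I_a)\ket{\psi_0} = \tr(P\rho_0)$, which is \Cref{eq:topEigenspace-overlap-def}. Next, using $\rho_0 = 2\Delta + \rho_1$,
\[
\tr(P\rho_0) = 2\tr(P\Delta) + \tr(P\rho_1) = 2\lambda\,\rank(P) + \tr(P\rho_1 P).
\]
Here $\tr(P\Delta) = \tr(P\Delta P) = \lambda\tr(P)$, and $\tr(P\rho_1 P)\geq 0$ because $\rho_1\succeq 0$.

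For the inequality, split on the sign of $\lambda$. If $\lambda>0$, then $\rank(P)\geq 1$ gives $\tr(P\rho_0)\geq 2\lambda\rank(P)\geq 2\lambda$. If $\lambda\leq 0$, the bound holds trivially because the left-hand side is nonnegative. In fact, $\tr(\Delta)=0$ forces $\lambda\geq 0$, and $\lambda=0$ only when $\Delta=0$.

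For the equality case with $\lambda>0$, the chain above is tight exactly when both dropped terms vanish: $2\lambda(\rank(P)-1)=0$ and $\tr(P\rho_1P)=0$. The first condition is equivalent to $\rank(P)=1$ since $\lambda>0$. The second is equivalent to $P\rho_1P=0$, because $P\rho_1P\succeq0$ and a positive semidefinite operator with zero trace is zero. This gives the claimed if-and-only-if.

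For the pure case $\rho_0=\ketbra{\phi}{\phi}$, I would split on whether $\lambda>0$.

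If $\lambda>0$, \Cref{lemma:one-pure-state-warm-start} already gives $\rank(P)=1$ with $P=\ketbra{v_+}{v_+}$, so equality holds iff $\bra{v_+}\rho_1\ket{v_+}=0$. If $\rho_1\perp\ket{\phi}$, meaning $\rho_1\ket{\phi}=0$, then $\Delta\ket{\phi}=\ket{\phi}/2$, so $\ket{v_+}=\ket{\phi}$ and $\bra{v_+}\rho_1\ket{v_+}=0$. Conversely, suppose $\rho_1\ket{v_+}=0$. Then $2\Delta\ket{v_+}=\innerprod{\phi}{v_+}\ket{\phi}=2\lambda\ket{v_+}$, so $\ket{v_+}\propto\ket{\phi}$ and hence $\rho_1\ket{\phi}=0$. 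Using \Cref{eq:warm-start-decomp} here is convenient.

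If $\lambda=0$, then $\Delta=0$, so $\rho_1=\ketbra{\phi}{\phi}$ is not orthogonal to $\ket{\phi}$. Equality would require $\tr(P\rho_0)=0$, but $P=I$ here, so $\tr(P\rho_0)=1\neq 0$ and equality fails. Thus the pure-case statement holds with either reading.

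The main delicate point is this equality characterization in the pure case, specifically matching the condition $P\rho_1P=0$ with orthogonality $\rho_1\ket{\phi}=0$. That is where the explicit eigenvector equation from the warm-start lemma is needed.
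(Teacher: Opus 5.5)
Your proof is correct and follows the paper's argument: you rewrite the overlap as $\tr(P\rho_0)$, split $\rho_0 = 2\Delta + \rho_1$, drop the nonnegative term $\tr(P\rho_1)$, and use $\tr(P\Delta) = \lambda\,\rank(P)$, which gives equality exactly when $\rank(P)=1$ and $P\rho_1P=0$. Your remark that $\lambda \geq 0$ (needed for $2\lambda\,\rank(P)\geq 2\lambda$), together with your eigenvector-equation argument for the pure case (including $\Delta=0$), makes rigorous two points the paper only asserts.
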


\begin{proof}
Since $\ket{\psi_0}$ and $\ket{\psi_1}$ are purifications of $\rho_0$ and $\rho_1$, respectively, for any $n$-qubit operator $M$ acting on the system register, it follows that:
\[
\bra{\psi_i}( M \otimes I_a )\ket{\psi_i}
=
\tr(M\rho_i),
\qquad i\in\{0,1\}.
\]
Therefore, we have
\begin{subequations}
\label{eq:top-overlap-avg}
\begin{align}
\norm[\big]{\ab\big( \Pi_{\max}^{(\Delta)}\otimes I_a)\ket{\psi_0}}_2^2 &= \bra{\psi_0}\ab\big(\Pi_{\max}^{(\Delta)}\otimes I_a )\ket{\psi_0}\\
&= \tr\ab\big(\Pi_{\max}^{(\Delta)}\rho_0).
\end{align}
\end{subequations}

On the other hand, since $\Pi_{\max}^{(\Delta)}$ projects onto the top eigenspace of $\Delta$, which corresponds to the eigenvalue $\lambda_{\max}$, we have
\[
\Pi_{\max}^{(\Delta)} \Delta = \Delta \Pi_{\max}^{(\Delta)} = \lambda_{\max}(\Delta)\Pi_{\max}^{(\Delta)}.
\]
Taking the trace gives
\begin{equation}
\label{eq:trace-top-eigspace}
\tr\ab\big(\Pi_{\max}^{(\Delta)}\Delta) = \lambda_{\max}(\Delta)\tr\ab\big(\Pi_{\max}^{(\Delta)}).
\end{equation}

Since $\rho_1\succeq 0$ and $\Pi_{\max}^{(\Delta)}$ is a projector, we have $\tr\ab\big(\Pi_{\max}^{(\Delta)}\rho_1)\ge 0$, and hence
\begin{subequations}
\label{eq:avg-ge-lambda}
\begin{align}
\tr\ab\big(\Pi_{\max}^{(\Delta)}\rho_0)
&\ge \tr\ab\big(\Pi_{\max}^{(\Delta)}\rho_0) -  \tr\ab\big(\Pi_{\max}^{(\Delta)}\rho_1)\\
&= \tr\ab\big(\Pi_{\max}^{(\Delta)}(\rho_0-\rho_1))\\
&= 2\tr\ab\big(\Pi_{\max}^{(\Delta)}\Delta)\\
&= 2\lambda_{\max}(\Delta)\tr\ab\big(\Pi_{\max}^{(\Delta)})\\
&\ge 2\lambda_{\max}(\Delta).
\end{align}
\end{subequations}
Here, the third line follows by substituting $\Delta=(\rho_0-\rho_1)/2$ into \Cref{eq:trace-top-eigspace}, and the last line uses the fact that $\tr\ab\big(\Pi_{\max}^{(\Delta)}) = \rank\ab\big(\Pi_{\max}^{(\Delta)})\ge 1$, because $\Pi_{\max}^{(\Delta)}\neq 0$.

Combining \cref{eq:top-overlap-avg} and \cref{eq:avg-ge-lambda} yields the desired overlap bound: 
\[
\norm[\big]{\ab\big(\Pi_{\max}^{(\Delta)}\otimes I_a)\ket{\psi_0}}_2^2 \ge 2\lambda_{\max}(\Delta). 
\]

To see the equality condition in the case $\lambda_{\max}(\Delta)>0$, we examine when both inequalities in \Cref{eq:avg-ge-lambda} are tight, which corresponds to the following conditions 
\begin{equation}
    \label{eq:overlap-lemma-equal-cond}
    \tr\ab\big(\Pi_{\max}^{(\Delta)}\rho_1)=0 \quad\text{and}\quad \tr\ab\big(\Pi_{\max}^{(\Delta)})=1.
\end{equation} 
Therefore, the first condition means that $\rho_1$ has no support on the top eigenspace of $\Delta$, equivalently $\Pi_{\max}^{(\Delta)} \rho_1 \Pi_{\max}^{(\Delta)} = 0$; while the second condition means that $\rank\rbra[\big]{\Pi_{\max}^{(\Delta)}} = 1$.

In particular, when $\rho_0$ is pure, \Cref{eq:overlap-lemma-equal-cond} simplifies to the condition that $\rho_0=\ketbra{\psi}{\psi}$ is orthogonal to $\rho_1$, more precisely, $\bra{\psi}\rho_1\ket{\psi}=0$.
\end{proof}

\subsection{Quantum query algorithm for estimating operator norm  distance}
\label{subsec:query-general}

We now present the main result of this section: 
\begin{theorem} \label{thm:query-general}
    Given quantum query access to the state-preparation circuits $Q_0$ and $Q_1$ of quantum states $\rho_0$ and $\rho_1$, respectively, there exists an explicit quantum algorithm that estimates $\Tinfty\ab(\rho_0, \rho_1)$ to within additive error $\epsilon$, using $O\ab\big(\log(1/\epsilon)/\epsilon^{3/2})$ queries to $Q_0$ and $Q_1$.
\end{theorem}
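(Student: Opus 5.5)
We reuse the architecture of \Cref{alg:T_inf_estimation_one_pure}, changing only how the maximum phase estimation is invoked. For each $k\in\{0,1\}$, we build the unitary dilation $U_{A_k}$ of $A_k=(I-(-1)^k\Delta)/2$ with $\Delta=(\rho_0-\rho_1)/2$, using \Cref{lemma:dilation-Delta} and LCU. We then form the qubitization $W_{A_k}$ as in \Cref{lemma:qubitization} and work with $-W_{A_k}$. The spectral analysis from the proof of \Cref{thm:query-pure} applies verbatim. The eigenphases of $-W_{A_k}$ lie in $[0,2\pi-2\epsilon)$, and the largest one is
\[
\theta^{(k)}_{\max}=\pi+2\arccos\bigl((1-\lambda_{\max}((-1)^k\Delta))/2\bigr),
\]
which is strictly increasing in $\lambda_{\max}((-1)^k\Delta)$. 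The top eigenspace of $-W_{A_k}$ contains the vectors $\ket{\phi_j^+}$ for $\ket{v_j}$ ranging over the top eigenspace of $(-1)^k\Delta$. Since $\Tinfty(\rho_0,\rho_1)=\max_k\lambda_{\max}((-1)^k\Delta)$, it suffices to estimate $\max_k\theta^{(k)}_{\max}$ to precision $O(\epsilon)$ and then invert the map by the mean value theorem, exactly as before.

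The new ingredient is the overlap. Using \Cref{lem:top_eigenspace_overlap} for $k=0$, and its analogue with the roles of $\rho_0,\rho_1$ swapped for $k=1$, the state $\ket{0}^{\otimes(2a+2)}\ket{\psi_k}$ satisfies $\tr(\Pi^{((-1)^k\Delta)}_{\max}\rho_k)\ge 2\lambda_{\max}((-1)^k\Delta)$. The same computation as in \Cref{thm:query-pure}, which loses a factor $1/2$ from $\ket{\phi_j^+}=(\ket{u_j}-i\ket{u_j^\perp})/\sqrt2$, shows that its squared overlap with the top eigenspace of $-W_{A_k}$ is at least $\lambda_{\max}((-1)^k\Delta)$. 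We then do a case split on the size of $\Tinfty$.

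\textbf{Small distance.} If $\Tinfty(\rho_0,\rho_1)\le\epsilon$, any output in $[0,\epsilon]$ suffices. We therefore run $\mathsf{MaxPhaseEst}(-W_{A_k},I\otimes Q_k,c\epsilon,\gamma)$ with $\gamma=\sqrt{\epsilon}$, clip the resulting estimate $\tilde\lambda$ to be $\ge 0$, and take the maximum over $k$.

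\textbf{Large distance.} If $\Tinfty>\epsilon$, then for the maximizing $k$ the overlap is at least $\epsilon=\gamma^2$. Hence \Cref{lem:max_phase_estimation} returns $\theta^{(k)}_{\max}$ up to error $c\epsilon$ with probability $0.99$.

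Two facts make the single choice $\gamma=\sqrt\epsilon$ safe in both cases. First, by the ``moreover'' clause of \Cref{lem:max_phase_estimation}, every run, including one with insufficient overlap, outputs something $c\epsilon$-close to some eigenphase of $-W_{A_k}$. Second, all eigenphases of $-W_{A_k}$ are at most $\theta^{(k)}_{\max}\le\max_{k'}\theta^{(k')}_{\max}$. So no run overestimates beyond $c\epsilon$, and in the large-distance case the maximizing run achieves the maximum. Taking the maximum of the two outputs therefore gives an $O(\epsilon)$-accurate estimate of $\max_k\theta^{(k)}_{\max}$, or correctly certifies a value $\le O(\epsilon)$ in the small case. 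One subtlety remains: in the small case, eigenphases of the form $\pi-\vartheta_j$ (and phase $0$ from $\mathcal M$) may be output. These translate to estimates $\tilde\lambda\le\lambda_{\max}$, which after clipping is harmless since we only need the output to land within $\epsilon$ of a value in $[0,\epsilon]$.

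The cost per $k$ is $O(1/\gamma)=O(1/\sqrt\epsilon)$ queries to $Q_k$ and $O(\log(1/\gamma)/(\gamma\epsilon))=O(\log(1/\epsilon)/\epsilon^{3/2})$ queries to $-W_{A_k}$, each of which costs $O(1)$ queries to $Q_0,Q_1$. This gives the claimed bound. The main obstacle is the case analysis: we must make sure that the weak $\gamma$ and the non-promised runs never inflate the estimate, and that the map from phase to $\lambda$ is handled uniformly over both branches $\pi\pm\vartheta_j$. The monotonicity argument above is intended to settle both points.
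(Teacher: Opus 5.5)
Your proposal is correct and follows essentially the paper's proof. Both use the same dilation, qubitization and $\mathsf{MaxPhaseEst}$ pipeline with $\gamma=\sqrt{\epsilon}$, the overlap bound $\ge\lambda_{\max}((-1)^k\Delta)$ from \Cref{lem:top_eigenspace_overlap}, and the same clipping plus $1$-Lipschitz argument via the ``moreover'' clause for low-overlap runs; the paper splits on whether $\lambda_{\max}((-1)^k\Delta)\ge\epsilon$ for each $k$ rather than globally on $\Tinfty$, which is only an organizational difference.

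One small correction: vectors in $\mathcal{M}$ are fixed by $W_{A_k}$, so under $-W_{A_k}$ they have phase $\pi$, not $0$. This matters because $1-2\cos((\pi-\theta)/2)$ equals $-1$ at $\theta=\pi$, which is harmless after clipping, but equals $1$ at $\theta=0$, which would break your no-overestimate claim.
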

\begin{proof}
Given $(n+a)$-qubit unitary oracles $Q_0$ and $Q_1$ that prepare purifications $\ket{\psi_0}$ and $\ket{\psi_1}$ of the $n$-qubit quantum states $\rho_0$ and $\rho_1$, respectively. We can estimate $\Tinfty\rbra{\rho_0, \rho_1}$ as follows.

\begin{enumerate}
    \item For each $k \in \cbra{0, 1}$, 
\begin{enumerate}[label=1.\arabic*]
    \item Construct a unitary operator $U_{A_k}$ that is an $(n+2a+2)$-qubit unitary dilation of
    \[
    A_k \coloneq \frac{I - (-1)^{k}\Delta}{2}, \qquad \text{where } \Delta \coloneq \frac{\rho_0-\rho_1}{2},
    \]
    using $O(1)$ queries to $Q_0$ and $Q_1$ by \cref{lemma:dilation-Delta} and linear combination of unitaries~\cite{CW12}.
    \item Construct a qubitization unitary operator $W_{A_k} \coloneq U_{A_k}R(U_{A_k})^\dag R$ where $R = 2\Pi - I_{n+2a+2}$ and $\Pi = \ketbra{0}{0}^{\otimes (2a+2)} \otimes I_n$, using $O(1)$ queries to $U_{A_k}$. 
    \item Apply the maximum phase estimation algorithm in \cref{lem:max_phase_estimation}, let $\tilde{\theta}_{\max}^{(k)}$ be the output of $\mathsf{MaxPhaseEst}\ab*(-W_{A_k},\, I_{2a+2} \otimes Q_k,\, \epsilon,\, \gamma)$, where $\gamma = \sqrt{\epsilon}$.
    \item Let $\tilde{\lambda}_{\max}^{(k)} = \max\{0, 1 - 2\cos\ab*((\pi - \tilde{\theta}^{(k)}_{\max})/{2})\}$.
\end{enumerate}
    \item Return $\max\ab*\{\tilde{\lambda}_{\max}^{(0)}, \tilde{\lambda}_{\max}^{(1)}\}$.
\end{enumerate}
The above algorithm is formally given in \cref{alg:T_inf_estimation}.

\begin{algorithm}[ht]
\caption{Operator norm distance estimation}
\label{alg:T_inf_estimation}
\begin{algorithmic}[1]
\Require $(n+a)$-qubit unitary oracles $Q_0$ and $Q_1$ that prepare purifications of $n$-qubit quantum states $\rho_0$ and $\rho_1$, precision $\epsilon \in (0, 0.1)$.
\Ensure An estimate of $\Tinfty(\rho_0, \rho_1)$.
 
\State Denote $\Delta \coloneq (\rho_0-\rho_1)/2$ and $R \coloneq 2\Pi - I_{n+2a+2}$, where $\Pi \coloneq \ketbra{0}{0}^{\otimes(2a+2)} \otimes I_n$.

\For{$k \in \{0,1\}$}
    \State Construct an $(n+2a+2)$-qubit unitary dilation $U_{A_k}$ of
    $A_k \coloneq \rbra{I-(-1)^k\Delta}/{2}$.
    \State Construct a qubitization unitary $W_{A_k} \coloneq U_{A_k} R \rbra{U_{A_k}}^\dagger R$.
    \State $\tilde{\theta}^{(k)}_{\max} \gets \mathsf{MaxPhaseEst}\ab*(-W_{A_k},\, I_{2a+2} \otimes Q_k,\, \epsilon,\, \gamma)$, where $\gamma = \sqrt{\epsilon}$.
    \State $\tilde{\lambda}_{\max}^{(k)} \gets \max\{0, 1 - 2\cos\ab*((\pi - \tilde{\theta}^{(k)}_{\max})/{2})\}$.
\EndFor

\State \Return $\max\{\tilde{\lambda}_{\max}^{(0)}, \tilde{\lambda}_{\max}^{(1)}\}$.
\end{algorithmic}
\end{algorithm}

First we analyze the complexity of \cref{alg:T_inf_estimation}. Note that the unitary dilation $U_{A_k}$ can be implemented using $O(1)$ queries to $Q_0$ and $Q_1$ by \cref{lemma:dilation-Delta} and linear combination of unitaries~\cite{CW12}. The qubitization unitary $-W_{A_k}$ uses $O(1)$ queries to $U_{A_k}$. Taking $\gamma = \sqrt{\epsilon}$, the subroutine $\mathsf{MaxPhaseEst}(-W_{A_k}, I_{2a+2}\otimes Q_k, \epsilon, \gamma)$ uses a total number of $O(\log(1/\epsilon)/\epsilon^{3/2})$ queries to $Q_0$ and $Q_1$. Therefore, the total number of queries to $Q_0$ and $Q_1$ is $O(\log(1/\epsilon)/\epsilon^{3/2})$.

Next, we prove the correctness of the algorithm. We now consider the properties of $\tilde{\theta}^{(k)}_{\max}$ for each $k \in \{0, 1\}$. 
For convenience, here we assume that $\lambda_{\max}\rbra{(-1)^k\Delta} \geq \epsilon$ (and the corner case where $\lambda_{\max}\rbra{(-1)^k\Delta} < \epsilon$ will be discussed later).
Note that the random variable $\tilde{\theta}^{(k)}_{\max}$ is the output of $ \mathsf{MaxPhaseEst}\ab*(-W_{A_k},\, I_{2a+2} \otimes Q_k,\, \epsilon,\, \gamma)$, where $\gamma = \sqrt{\epsilon}$. 
Then, it holds (shown later) that
\begin{equation}\label{eq:subroutine_ouput_mixed}
    \Pr\sbra[\Big]{\abs[\big]{\tilde{\theta}^{(k)}_{\max} - \theta_{\max}^{(k)} }\leq \epsilon} \geq 0.99, \qquad \text{where } \theta_{\max}^{(k)} = \pi + 2\arccos\ab\Big(\frac{1-\lambda_{\max}((-1)^k\Delta)}{2}).
\end{equation}
Therefore, when we take
\[
\tilde{\lambda}_{\max}^{(k)} = \max\ab\bigg\{0,  1 - 2\cos\ab\Big(\frac{\pi - \tilde{\theta}_{\max}^{(k)}}{2})\},
\]
and return $\tilde\lambda_{\max} = \max\{\tilde{\lambda}_{\max}^{(0)}, \tilde{\lambda}_{\max}^{(1)}\}$, it holds that
\begin{equation}\label{eq:error_two_mixed}
    \Pr\sbra*{\abs[\big]{\tilde{\lambda}_{\max} - \Tinfty(\rho_0, \rho_1)} \leq \epsilon} \geq 0.99^2 > 0.98.
\end{equation}
To see this, note that $\Tinfty(\rho_0, \rho_1) = \max\{\lambda_{\max}(\Delta), \lambda_{\max}(-\Delta)\}$ by definition. Therefore, by the mean value theorem, there exists some $\xi^{(k)}$ between $\theta_{\max}^{(k)}$ and $\tilde{\theta}_{\max}^{(k)}$ such that
\begin{align*}
    \abs{\tilde{\lambda}_{\max}^{(k)} - \lambda_{\max}((-1)^k\Delta)} &= \abs[\Big]{\sin\ab\Big(\frac{\xi^{(k)}-\pi}{2})} \abs{\tilde{\theta}_{\max}^{(k)} - \theta_{\max}^{(k)}}.
\end{align*}
Since $\abs{\sin\ab((\xi-\pi)/2)} \leq 1$, we have
\[
\Pr\sbra*{\abs[\big]{\tilde{\lambda}_{\max}^{(k)} - \lambda_{\max}((-1)^k\Delta) }\leq \epsilon} \geq 0.99,
\]
which gives \cref{eq:error_two_mixed}. Therefore, \cref{alg:T_inf_estimation} returns an estimate of $\Tinfty(\rho_0, \rho_1)$ to within additive error $\epsilon$.

To complete the proof, it remains to prove that the output of two subroutines satisfy the condition in \cref{eq:subroutine_ouput_mixed}. We begin with the analysis on the subroutine $\mathsf{MaxPhaseEst}(-W_{A_0}, I_{2a+2} \otimes Q_0, \epsilon, \gamma)$. To show that applying the quantum maximum phase estimation algorithm in \cref{lem:max_phase_estimation} gives a result satisfying \cref{eq:subroutine_ouput_mixed}, we prove the following properties in order.
\begin{enumerate}
    \item The eigenphases of $-W_{A_0}$ are in $[0, 2\pi - 2\epsilon)$, and the largest eigenphase of $-W_{A_0}$ is $\theta_{\max}^{(0)} = \pi + 2\arccos\ab((1-\lambda_{\max}(\Delta))/{2})$.
    \item The state preparation unitary $I_{2a+2} \otimes Q_0$ prepares a state that has an overlap at least $\gamma = \sqrt{\epsilon}$ with the top eigenspace of $-W_{A_0}$.
\end{enumerate}

Let $d=2^n$ be the dimension of the quantum states $\rho_0$ and $\rho_1$. For $j\in \{1,\dots, d\}$, let $\ket{v_j}$ denote an eigenvector of $\Delta$ with eigenvalue $\lambda_j$, i.e., $\Delta\ket{v_j} = \lambda_j \ket{v_j}$, then $\ket{v_j}$ is also an eigenvector of $A_0$ such that
\[
A_0 \ket{v_j} = \mu_j \ket{v_j}, \qquad \mu_j \coloneq \frac{1-\lambda_j}{2}.
\]
Note that the eigenvalues of $\Delta$ are in $[-1/2, 1/2]$, and therefore the eigenvalues of $A_0$ are in $[1/4, 3/4]$.

Let $\ket{u_j} = \ket{0}^{\otimes \rbra{2a+2}}\otimes\ket{v_j}$, and define 
\[
    \ket{u_j^\perp} \coloneq \frac{U_{A_0}\ket{u_j} - \mu_j\ket{u_j}}{\sqrt{1 - \mu_j^2}}.
\]
By \cref{lemma:qubitization}, since $\abs{\mu_j}<1$, the qubitization unitary $W_{A_0}$ has eigenvectors 
\begin{align*}
        \ket{\phi^{\pm}_{j}} & = \frac{1}{\sqrt{2}} \rbra[\big]{ \ket{u_j} \mp i \ket{u_j^\perp}}, 
\end{align*}
with $W_{A_0} \ket{\phi^{\pm}_{j}} = e^{\pm i \vartheta_j} \ket{\phi^{\pm}_{j}}$, where
\[
\vartheta_j \coloneq 2\arccos(\mu_j) = 2\arccos\ab\Big(\frac{1-\lambda_j}{2}).
\]
Moreover, let $\mathcal{P} \coloneqq \spanspace\set{\ket{u_j}}{1 \leq j \leq 2^n}$, $\mathcal{Q} \coloneq \spanspace\set{\ket{u_j^\perp}}{1 \leq j \leq 2^n}$, and $\mathcal{M} = \rbra{\mathcal{P} + \mathcal{Q}}^{\perp}$, then every nonzero vector $\ket{\phi'} \in \mathcal{M}$ is an eigenvector of $W_{A_0}$ with eigenphase $0$, i.e., $W_{A_0} \ket{\phi'} = \ket{\phi'}$.

Since the function $\arccos((1-x)/{2})$ is strictly monotonically increasing on $[-1/2, 1/2]$, for increasing eigenvalues of $\Delta$, 
\[
\lambda_1 \leq \lambda_2 \leq \cdots \leq \lambda_{d},
\]
we have
\[
\vartheta_1 \leq \vartheta_2 \leq \cdots \leq \vartheta_{d}, \quad\text{where }
\vartheta_{j} \in \ab[2\arccos\ab(3/4), 2\arccos\ab(1/4)] \subset (0, \pi-2\epsilon).
\]
Now we add a global phase $e^{i\pi}$ to $W_{A_0}$, i.e., $e^{i\pi}W_{A_0} = - W_{A_0}$. The phases of $- W_{A_0}$, denoted by $\theta^{\pm}_{j} = \pi \pm \vartheta_j$, lie in $[0, 2\pi-2\epsilon)$ as $\pi - \vartheta_j \in (0,\pi-2\epsilon)$ and $\pi + \vartheta_j \in (\pi,2\pi-2\epsilon)$. The largest phase of $- W_{A_0}$, denoted by $\theta_{\max}^{(0)} \coloneq \max_j\{\theta^{\pm}_{j}\} = \max_j\{\theta^{+}_{j}\}$, clearly corresponds to the largest eigenvalue of $\Delta$, 
\begin{equation}\label{eq:max_eigenphase}
    \theta_{\max}^{(0)} = \pi+\vartheta_{N} = \pi + 2\arccos\ab(\frac{1-\lambda_{\max}(\Delta)}{2}).
\end{equation}

Then we show that the state preparation unitary $I_{2a+2} \otimes Q_0$ prepares a state that has an overlap at least $\gamma = 1/2$ with the top eigenspace of $-W_{A_0}$. The unitary $I_{2a+2} \otimes Q_0$ prepares the input state
\[
\ket{\Psi_0} \coloneq (I_{2a+2} \otimes Q_0) \ket{0}^{\otimes \rbra{n+3a+2}} = \ket{0}^{\otimes \rbra{2a+2}} \ket{\psi_0}
\]
where $\ket{\psi_0}$ is the purification of the state $\rho_0$. By \cref{lem:top_eigenspace_overlap}, the overlap between the input state and the top eigenspace of $-W_{A_0}$ is at least
\begin{align*}
    \sum_{j:\theta^{+}_{j} = \theta_{\max}^{(0)}} \norm[\big]{(\bra{\phi_{j}^{+}}\otimes I_a) \ket{\Psi_0}}_2^2 &= \sum_{j:\theta^{+}_{j} = \theta_{\max}^{(0)}} \norm[\Big]{\ab\Big(\frac{1}{\sqrt{2}} \ab\big( \bra{u_j} + i \bra{u_j^\perp})\otimes I_a) \ket{\Psi_0}}_2^2 \\
    &\geq \sum_{j:\theta^{+}_{j} = \theta_{\max}^{(0)}} \norm[\Big]{\frac{1}{\sqrt{2}}\ab\big(\bra{u_j} \otimes I_a) \ket{\Psi_0}}_2^2\\
    &= \sum_{j:\lambda_{j}=\lambda_{\max}(\Delta)} \norm[\Big]{\frac{1}{\sqrt{2}}\big( \bra{0}^{\otimes {2a+2}} \otimes \bra{v_j}  \otimes I_a)\ab\big(\ket{0}^{\otimes \rbra{2a+2}} \otimes \ket{\psi_0})}_2^2\\
    &= \sum_{j:\lambda_{j}=\lambda_{\max}(\Delta)} \norm[\Big]{\frac{1}{\sqrt{2}}\big( \bra{v_j}  \otimes I_a)\ket{\psi_0}}_2^2\\
    &= \frac{1}{2} \norm[\big]{\ab\big( \Pi_{\max}^{(\Delta)}\otimes I_{a})\ket{\psi_0}}_2^2\\
    &\ge \lambda_{\max}(\Delta).
\end{align*}
With $\lambda_{\max}(\Delta) \geq \epsilon$, we can take $\gamma = \sqrt{\epsilon}$. Therefore, by \cref{lem:max_phase_estimation}, the maximum phase estimation algorithm $\mathsf{MaxPhaseEst}(-W_{A_0}, I_{2a+2} \otimes Q_0, \epsilon, \gamma)$ uses $O(1/\sqrt{\epsilon})$ queries to $Q_0$ and $O(\log(1/\epsilon)/\epsilon^{3/2})$ queries to $-W_{A_0}$, outputs an estimate $\tilde{\theta}_{\max}^{(0)}$ such that
\[
\Pr\sbra[\Big]{\abs[\big]{\tilde{\theta}^{(0)}_{\max} - \theta_{\max}^{(0)} }\leq \epsilon} \geq 0.99,
\]
as in \cref{eq:subroutine_ouput_mixed} for $k=0$.

Note that if $\lambda_{\max}(\Delta) \in [0, \epsilon)$, taking $\gamma = \sqrt{\epsilon}$ may not satisfy the overlap promise since $\gamma^2 = \epsilon > \lambda_{\max}(\Delta)$. However, even without the overlap promise, the maximum phase estimation algorithm $\mathsf{MaxPhaseEst}(-W_{A_0}, I_{2a+2} \otimes Q_0, \epsilon, \gamma)$ still outputs an estimate $\tilde{\theta}_{\max}^{(0)}$, and with probability at least $0.99$, there exists an eigenphase $\theta_j^{\pm}$ of $-W_{A_0}$ such that $\abs{\tilde{\theta}_{\max}^{(0)} - \theta_j^{\pm}} \leq \epsilon$. Let $f(\theta) = 1 - 2\cos((\pi - \theta)/2)$, by the mean value theorem, there exists a point $\xi^{(0)}$ between $\tilde{\theta}_{\max}^{(0)}$ and $\theta_j^{\pm}$ such that
\begin{align*}
    \abs{f(\tilde{\theta}_{\max}^{(0)}) - f(\theta_j^{\pm})} &= \abs{f'(\xi^{(0)})} \abs{\tilde{\theta}_{\max}^{(0)} - \theta_j^{\pm}}\\
    &= \abs[\Big]{\sin\ab\Big(\frac{\xi^{(0)}-\pi}{2})} \abs{\tilde{\theta}_{\max}^{(0)} - \theta_j^{\pm}}\\
    &\leq \abs{\tilde{\theta}_{\max}^{(0)} - \theta_j^{\pm}}\\
    &\leq \epsilon.
\end{align*}
By definition, $f(\theta_j^{\pm}) = \lambda_{j}$, thus we have
\[
f(\tilde{\theta}_{\max}^{(0)}) \leq \lambda_j + \epsilon \leq \lambda_{\max}(\Delta) + \epsilon.
\]
Since $0 \leq \lambda_{\max}(\Delta) < \epsilon$, taking $\tilde{\lambda}_{\max}^{(0)} = \max\{0, f(\tilde{\theta}_{\max}^{(0)})\}$ still guarantees that
\[
\Pr\sbra[\Big]{\abs[\big]{\tilde{\lambda}^{(0)}_{\max} - \lambda_{\max}(\Delta) }\leq \epsilon} \geq 0.99.
\]

Similarly, we can analyze the subroutine $\mathsf{MaxPhaseEst}(-W_{A_1}, I_{2a+2}\otimes Q_1, \epsilon, \gamma)$ by replacing $\Delta$ with $-\Delta$. Note that we have $\norm{\ab*(\Pi_{\max}^{(-\Delta)}\otimes I_{a})\ket{\psi_1}}_2^2 \ge 2\lambda_{\max}(-\Delta)$ by symmetry. We can use the same analysis above to show that the algorithm $\mathsf{MaxPhaseEst}(-W_{A_1}, I_{2a+2} \otimes Q_1, \epsilon, \gamma)$ uses $O(1/\sqrt{\epsilon})$ queries to $Q_1$ and $O(\log(1/\epsilon)/\epsilon^{3/2})$ queries to $-W_{A_1}$, outputs an estimate $\tilde{\theta}_{\max}^{(1)}$ such that
\[
\Pr\sbra[\Big]{\abs[\big]{\tilde{\theta}^{(1)}_{\max} - \theta_{\max}^{(1)} }\leq \epsilon} \geq 0.99,
\]
as in \cref{eq:subroutine_ouput_mixed} for $k=1$.
\end{proof}

\subsection{Quantum sample algorithm for estimating operator norm  distance}
\label{subsec:general-sample}

\begin{theorem} \label{thm:sample-general}
    There is a quantum algorithm that estimates  $\Tinfty\rbra{\rho_0, \rho_1}$ to within additive error $\epsilon$, using $O\rbra{\log^2(1/\epsilon)/\epsilon^{3}}$ samples of $\rho_0$ and $\rho_1$. 
\end{theorem}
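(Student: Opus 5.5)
The plan is to obtain \cref{thm:sample-general} directly from the query upper bound in \cref{thm:query-general} via the quantum sample-to-query lifting of \cref{lemma:lifting}, mirroring the proof of \cref{thm:sample-pure}.

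\textbf{Step 1: the query algorithm.} By \cref{thm:query-general}, \cref{alg:T_inf_estimation} estimates $\Tinfty(\rho_0,\rho_1)$ to within additive error $\epsilon$ with success probability at least $0.98$. It uses $q = O\rbra{\log(1/\epsilon)/\epsilon^{3/2}}$ queries to $Q_0$, $Q_1$, their inverses, and their controlled versions. The algorithm only accesses the states through the state-preparation circuits, which is exactly the setting of \cref{lemma:lifting}.

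\textbf{Step 2: lifting.} Apply \cref{lemma:lifting} to this algorithm. This yields an algorithm $\mathcal{A}'$ that uses $O(q^2) = O\rbra{\log^2(1/\epsilon)/\epsilon^3}$ samples of $\rho_0$ and $\rho_1$. Its output distribution is $0.99$-close to that of the query algorithm, which I read as within total variation distance $0.01$.

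\textbf{Step 3: success probability.} The lifted algorithm then outputs an $\epsilon$-accurate estimate with probability at least $0.98 - 0.01 = 0.97$. This is a constant above $1/2$, so the success probability can be amplified to any desired constant by taking the median of $O(1)$ independent runs, which does not change the asymptotic sample count.

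\textbf{Main obstacle.} The only delicate point is checking that the hypotheses of \cref{lemma:lifting} cover the access model actually used, namely controlled and inverse queries that appear inside the LCU dilation (\cref{lemma:dilation-Delta}) and inside maximum phase estimation. As the lemma is stated for any algorithm making queries to state-preparation circuits in this sense (as cited from \cite{TWZ25,CWZ25}), I would simply invoke it, exactly as in the proof of \cref{thm:sample-pure}.
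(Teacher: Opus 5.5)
Your proposal is correct and matches the paper's proof: like the paper, you apply the sample-to-query lifting of \cref{lemma:lifting} to the $O(\log(1/\epsilon)/\epsilon^{3/2})$-query algorithm of \cref{thm:query-general}, which squares the query count to give $O(\log^2(1/\epsilon)/\epsilon^3)$ samples. Your success-probability accounting ($0.98-0.01$) is a harmless detail that the paper leaves implicit.
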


\begin{proof}
    This is obtained immediately by \cref{lemma:lifting} using the query complexity $O\rbra{\log(1/\epsilon)/\epsilon^{3/2}}$ given in \cref{thm:query-general}, which gives the sample complexity $O\rbra{\log^2(1/\epsilon)/\epsilon^{3}}$. 
\end{proof}

\section*{Acknowledgments}
\noindent The work of Yupan Liu was supported by funding from the Swiss State Secretariat for Education, Research and Innovation (SERI). 
The work of Qisheng Wang was supported by startup funding from Shanghai Jiao Tong University. 
The work of Zhan Yu was supported by the CQT Young Researcher Career Development Grant.
ChatGPT was used interactively to explore possible approaches to the research problems addressed in main technical sections, identify relevant references, and proofread the manuscript, while all writing, including mathematical statements and reasoning, was completed by the authors.

\bibliographystyle{alphaurl}
\bibliography{Linfty}

\appendix

\end{document}